\newcommand{\citet}[1]{\cite{#1}}
\newcommandx{\unsure}[2][1=]{\todo[linecolor=green,backgroundcolor=green!25,bordercolor=green,#1]{\normalsize #2}}
\newcommandx{\improvement}[2][1=]{\todo[inline,linecolor=blue,backgroundcolor=blue!05,bordercolor=blue,#1]{\normalsize #2}}
\newcommandx{\info}[2][1=]{\todo[linecolor=yellow,backgroundcolor=yellow!25,bordercolor=yellow,#1]{#2}}
\newcommandx{\floatmodel}[2][1=]{\todo[inline,linecolor=red,backgroundcolor=yellow!25,bordercolor=yellow,#1]{#2}}
\newcommandx{\thiswillnotshow}[2][1=]{\todo[disable,#1]{#2}}
\newcommandx{\karol}[2][1=]{\todo[inline,linecolor=blue,backgroundcolor=blue!25,bordercolor=blue,caption={\normalsize \textbf{Karol}},#1]{\normalsize #2}}
\newcommandx{\anita}[2][1=]{\todo[inline,linecolor=gray,backgroundcolor=red!25,bordercolor=red,caption={\normalsize
\textbf{Anita}},#1]{\normalsize #2}}
\newcommandx{\manita}[2][1=]{\todo[linecolor=gray,backgroundcolor=red!25,bordercolor=red,caption={\normalsize
\textbf{A}},#1]{\normalsize #2}}
\newcommandx{\evangelos}[2][1=]{\todo[inline,linecolor=gray,backgroundcolor=green!25,bordercolor=green,caption={\normalsize
\textbf{Evangelos}},#1]{\normalsize #2}}
\newtheorem{theorem}{Theorem}
\newtheorem{definition}[theorem]{Definition}
\newtheorem{lemma}[theorem]{Lemma}
\newtheorem{corollary}[theorem]{Corollary}
\newtheorem{claim}[theorem]{Claim}
\newtheorem{observation}[theorem]{Observation}
\newtheorem{remark}[theorem]{Remark}
\newtheorem*{maingoal*}{Main Question}
\numberwithin{theorem}{section}
\numberwithin{lemma}{section}
\numberwithin{claim}{section}
\numberwithin{corollary}{section}
\numberwithin{definition}{section}
\numberwithin{observation}{section}
\numberwithin{proposition}{section}
\newcommand{\dc}{{\downarrow}}
\newcommand{\eps}{\varepsilon}
\newcommand{\Oh}{\mathcal{O}}
\newcommand{\Os}{\Oh^{\star}}
\newcommand{\Otilde}{\widetilde{\Oh}}
\newcommand{\Ot}{\Otilde}
\newcommand{\nat}{\mathbb{N}}
\newcommand{\Aa}{\normalfont{{A}}}
\newcommand{\Bb}{\normalfont{{B}}}
\newcommand{\Ff}{\normalfont{{F}}}
\newcommand{\Rr}{\normalfont{{R}}}
\newcommand{\Ll}{\normalfont{{L}}}
\newcommand{\Ss}{\normalfont{{S}}}
\newcommand{\Cc}{\normalfont{{C}}}
\newcommand{\poly}{\mathrm{poly}}
\newcommand{\Prob}[1]{\mathbb{P} \left[ #1 \right]}
\newcommand{\Ex}[1]{\mathbb{E}\left[ #1 \right]}
\newcommand{\iv}[1]{\llbracket #1 \rrbracket}
\newcommand{\veca}{\normalfont{\textbf{a}}}
\newcommand{\vecb}{\normalfont{\textbf{b}}}
\newcommand{\vecc}{\normalfont{\textbf{c}}}
\newcommand{\vecx}{\normalfont{\textbf{x}}}
\newcommand{\vecy}{\normalfont{\textbf{y}}}
\newcommand{\vecu}{\normalfont{\textbf{u}}}
\newcommand{\vecs}{\normalfont{\textbf{s}}}
\newcommand{\vecr}{\normalfont{\textbf{r}}}
\renewcommand{\leq}{\leqslant}
\renewcommand{\geq}{\geqslant}
\renewcommand{\le}{\leqslant}
\renewcommand{\ge}{\geqslant}
\renewcommand{\subset}{\subseteq}
\renewcommand{\epsilon}{\varepsilon}
\title{Faster algorithms for $k$-Orthogonal Vectors in low dimension}
\date{}
\author{
  Anita D\"urr\footnote{ETH Z\"urich, Z\"urich, Switzerland, \textsf{anita.duerr@inf.ethz.ch}. {Part of this work was done while affiliated to Saarland University and Max Planck Institute for Informatics, Saarbrücken, Germany, where this work was part of the project TIPEA that has received funding from the European Research Council (ERC) under the European Unions Horizon 2020 research and innovation programme (grant agreement No. 850979).}}
    \and
    Evangelos Kipouridis\footnote{Max Planck Institute for Informatics,
        Saarbr\"ucken, Germany, \textsf{kipouridis@mpi-inf.mpg.de}.}
    \and
    Michael Lampis\footnote{Université Paris Dauphine, Paris, France, \textsf{michail.lampis@lamsade.dauphine.fr}}
    \and
    Karol W\k{e}grzycki\footnote{Max Planck Institute for Informatics,
        Saarbr\"ucken, Germany, \textsf{kwegrzyc@mpi-inf.mpg.de}. Supported by the Deutsche Forschungsgemeinschaft (DFG, German Research Foundation) grant number 559177164. }
}
\begin{document}

\maketitle

\thispagestyle{empty}
\begin{abstract}
    In the Orthogonal Vectors problem (OV), we are given two families $\Aa,
    \Bb$ of subsets of $\{1,\ldots,d\}$, each of size $n$, and the task is to decide whether
    there exists a pair $\veca \in \Aa$ and $\vecb \in \Bb$ such
    that $\veca \cap \vecb =
    \emptyset$. Straightforward algorithms for this problem run in
    $\mathcal{O}(n^2 \cdot d)$ or $\Oh(2^d \cdot n)$ time, and assuming SETH, there is no $2^{o(d)}\cdot n^{2-\varepsilon}$ time
    algorithm that solves this problem for any constant $\varepsilon > 0$.

    Williams (FOCS 2024) presented a $\tilde{\mathcal{O}}(1.35^d \cdot n)$-time algorithm for the
    problem, based on the succinct equality-rank decomposition of the disjointness
    matrix. In this paper, we present a combinatorial algorithm that runs in
    randomized time $\tilde{\mathcal{O}}(1.25^d \cdot n)$. This can be improved to $\Oh(1.16^d \cdot n)$ using computer-aided evaluations.

    We also consider a more general $k$-Orthogonal Vectors problem, where
    given $k$ families $\Aa_1,\ldots,\Aa_k$ of subsets of
    $\{1,\ldots,d\}$, each of size $n$, the task is to find
    elements $\veca_i \in \Aa_i$ for every $i \in \{1,\ldots,k\}$ such
    that $\veca_1 \cap \veca_2
    \cap \ldots \cap \veca_k = \emptyset$. We show that for every fixed $k \ge 2$, there
    exists $\varepsilon_k > 0$ such that the $k$-OV problem can be solved in time
    $\mathcal{O}(2^{(1 - \varepsilon_k)\cdot d}\cdot n)$. We also show that, asymptotically, this is the
    best we can hope for: 
    for any $\eps > 0$ there exists a $k \ge 2$ such that $2^{(1 -
    \varepsilon)\cdot d} \cdot n^{\mathcal{O}(1)}$ time algorithm for $k$-Orthogonal
    Vectors would contradict the Set Cover Conjecture.
    
\end{abstract}


\clearpage
\setcounter{page}{1}

\section{Introduction}

In the Orthogonal Vectors problem, the task is to find a disjoint pair of vectors in a given collection of vectors.
We view vectors in $\{0, 1\}^d$ as subsets of $\{1, 2, \dots, d\}$ for some dimension $d$.

\begin{definition}[Orthogonal Vectors ($2$-OV)]
    Given two families $\Aa, \Bb$ of subsets of $\{1,\ldots,d\}$ with $|\Aa| = |\Bb|
    = n$, the Orthogonal Vectors problem asks whether there exist $\veca \in
    \Aa$ and $\vecb \in \Bb$ such that $\veca \cap \vecb = \emptyset$.
\end{definition}

The naive algorithm for the Orthogonal Vectors problem runs in $\Oh(n^2\cdot d)$
time.  This quadratic-time algorithm has been slightly improved to
$n^{2-1/\Oh(\log(d/\log
n))}$\cite{chan-williams-16,batch-ov}, but no
$\Oh(n^{2-\eps})$-time algorithm is currently known for any $\eps > 0$. In
fact, the conjecture that there is no such $\eps > 0$ for which Orthogonal
Vectors can be solved in $\Oh(n^{2-\eps})$ 
time for $d = \omega(\log n)$ is one of
the central hypotheses in the field of fine-grained
complexity\cite{ipec-survey}.  The Orthogonal Vectors problem and its
connections to other problems have been thoroughly investigated~\cite{ChenW19,
GaoIKW19}, and a $2^{o(d)} \cdot n^{2-\eps}$-time algorithm for any $\eps > 0$
would contradict the Strong Exponential Time Hypothesis~\cite{ov-seth-hard}.

In this paper we focus on the low-dimensional regime when  $d = c \log n$ for some small constant $c > 0$. In that regime, the SETH lower bound does not preclude the possibility of subquadratic algorithms.
In fact, folklore algorithms with running times $\Oh(d \cdot 2^d \cdot n)$ or even $\Ot(n + 2^d)$ are known~\cite{bjorklund-ov}\footnote{$\Ot(\cdot)$ hides
polylogarithmic factors, while $\Os(\cdot)$ hides polynomial factors in the
input. Since $d = c \log n$ for some constant $c >0$ throughout this paper, $\Ot$ also hides polynomial factors in $d$.}.

Inspired by connections to non-uniform circuit lower bounds,
Williams~\cite{williams24} investigated whether the dependence on $d$ can be
improved, and designed an $\Ot(1.35^d\cdot n)$-time algorithm for Orthogonal Vectors.  Williams~\cite{williams24} exploited
the structure of the problem by presenting a systematic approach based on
constant-sized decompositions of the disjointness matrix.

Our first contribution is a new combinatorial algorithm for the problem:

\begin{restatable}{thm}{twoov}
    \label{thm:2ov}
    The Orthogonal Vectors problem can be solved in $\Ot(1.25^d\cdot n)$ time by a
    randomized, one-sided error algorithm that succeeds with probability $1 -
    2^{-d^{\Omega(1)}}$. Using computer-aided evaluation, the running time can be improved to $\Oh(1.16^d \cdot n)$.
\end{restatable}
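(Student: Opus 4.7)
The plan is to design a randomized combinatorial algorithm that achieves $\Ot(1.25^d\cdot n)$ by combining a careful split of the coordinate set $[d]$ with a precomputed data structure for fast orthogonality queries. The key insight is to exploit the trade-off between (i) the number of distinct projections of the vectors onto a subset of coordinates and (ii) the cost of looking up disjoint witnesses on the complementary coordinates. Below a threshold $d = \Oh(\log n)$ the folklore $\Ot(2^d + n)$ zeta-transform algorithm already matches the target, so I can restrict attention to the regime $d = \omega(\log n)$.

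First, I would fix a parameter $\rho \in (0,1)$ and, after applying a uniformly random permutation of the coordinates, split $[d] = L \cup R$ with $|L| = (1-\rho) d$ and $|R| = \rho d$. For each $\vecb \in \Bb$ I project onto $R$, group $\Bb$ by the $R$-projection, and precompute, for each group, the zeta transform on $L$, so that the query ``is there $\vecb$ in the group with $(\vecb \cap L) \cap x = \emptyset$?'' can be answered in $\Ot(1)$ time. Then, for each $\veca \in \Aa$, I iterate over the projections $y \subseteq R \setminus (\veca \cap R)$ that actually appear in $\Bb$ and perform the corresponding query with $x = \veca \cap L$. A disjoint pair exists iff some query returns \emph{yes}, giving correctness directly.

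The analysis balances two costs: the preprocessing, which scales like the number of distinct $R$-projections (at most $\min(n,\,2^{\rho d})$) times $|L|\cdot 2^{|L|}$, and the total query count, bounded by $\sum_{\veca} 2^{|R \setminus (\veca \cap R)|}$. The random permutation of the coordinates, together with standard concentration inequalities, forces the projection weights to lie near their expectations; choosing $\rho$ to equalize the dominant terms in the two expressions yields the target running time $\Ot(1.25^d\cdot n)$. The one-sided error with failure probability $2^{-d^{\Omega(1)}}$ then follows from $d^{\Omega(1)}$ independent repetitions of the randomization combined with a union bound over the rare bad events.

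The main obstacle will be adversarial inputs in which many vectors of $\Aa$ or $\Bb$ have unusually sparse or nearly identical $R$-projections, which would inflate one of the two costs above $1.25^d\cdot n$. Handling these cases is exactly where the random permutation (and a conditioning argument that restricts the analysis to a ``good'' event on the vectors' weight profiles) becomes essential, and where the analysis is most delicate; when the good event fails we either discard the iteration or fall back to a direct enumeration whose total contribution is negligible. Finally, for the $\Oh(1.16^d\cdot n)$ improvement, I would extend the scheme to a multi-level recursive splitting, introducing additional parameters that govern the sub-splits, and optimize the resulting cost recurrence numerically via a computer-aided search.
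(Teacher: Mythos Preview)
Your proposal has a genuine gap: the coordinate-split plus zeta-transform scheme you describe cannot beat $2^{d/2}\cdot n$ in the worst case, so it does not reach $1.25^d\cdot n$. Concretely, with $|L|=(1-\rho)d$ and $|R|=\rho d$, your preprocessing cost is at least $n\cdot 2^{(1-\rho)d}$ whenever the $R$-projections of $\Bb$ are distinct (which you cannot rule out), and your per-$\veca$ query cost is $2^{|R\setminus(\veca\cap R)|}\approx 2^{(1-\alpha)\rho d}$ once concentration kicks in, where $\alpha d=|\veca|$. Balancing these gives exponent $(1-\alpha)/(2-\alpha)$, which equals $1/2$ at $\alpha=0$ and stays well above $\log_2 1.25\approx 0.322$ for all small $\alpha$. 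The random coordinate permutation does not help here: the bottleneck is the Hamming weight $|\veca|$ itself, not how it is distributed between $L$ and $R$, so the ``good event'' you condition on cannot move the expectation. Your fallback to ``direct enumeration'' for the bad cases is exactly the $n^2 d$ or $2^d$ algorithm you are trying to beat. In short, what you have sketched is essentially the deterministic meet-in-the-middle algorithm of the paper's Lemma~\ref{thm:det-ov}, and the extra randomization is doing no real work.

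The idea you are missing is the representation method. Instead of a fixed coordinate split, the paper samples random \emph{certificate sets} $\vecc\subseteq[d]$ of a carefully chosen size $\lambda d$: a disjoint pair $(\veca,\vecb)$ is witnessed by some $\vecc$ with $\veca\subseteq\vecc$ and $\vecb\cap\vecc=\emptyset$, and the algorithm collides the $\vecc$'s reachable from $\Aa$ with those reachable from $\Bb$. After guessing $|\veca|=\alpha d$ and $|\vecb|=\beta d$, the number of sampled $\vecc$ is tuned so that a valid certificate exists with good probability, while the expected list sizes on both sides are controlled by binomial ratios such as $\binom{(1-\alpha)d}{(\lambda-\alpha)d}\big/\binom{(1-\alpha-\beta)d}{(\lambda-\alpha)d}$; optimizing $\lambda$ over $(\alpha,\beta)$ yields $1.25^d$ analytically (worst case at $\alpha=0,\beta=3/5$) and $1.16^d$ numerically (worst case at $\alpha=\beta=1/3$). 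Note also that the $1.16^d$ bound comes from this numerical optimization of $\lambda$, not from a multi-level recursive split as you suggest.
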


This result was independently discovered by Alman~and~Li~\cite{alman25}. It is worth noting
that the technique of Williams~\cite{williams24} also allows one to
deterministically count the number of solutions to the Orthogonal Vectors
problem in $\Ot(1.38^d \cdot n)$ time, whereas ours cannot count and is
inherently randomized. Our approach is based on an application of the
representation method~\cite{fomin-jacm2016}.  In a nutshell, we observe that if
$\veca \cap \vecb = \emptyset$, then there exists a certificate set $\vecc$ such
that: (i) $\veca \subseteq \vecc$, and (ii) $\vecb \cap \vecc = \emptyset$.  Our
algorithm samples an appropriate number of such sets and constructs a data
structure that certifies conditions (i) and (ii). 

We remark that, using computer-aided evaluation, the running time of our
algorithm can be improved to $\Oh(1.16^d \cdot n)$. This constant is very close
to the barrier achievable with current techniques when the size of $\veca$ and $\vecb$ is $d/4$: in that setting, a similar approach was used to design a $\Oh(1.14^d \cdot n)$ time algorithm for Orthogonal Vectors. Any improvement on that running time would improve known algorithms for Subset Sum~\cite{stoc21}.

Next, we consider a more general $k$-Orthogonal Vectors problem:

\begin{definition}[$k$-Orthogonal Vectors ($k$-OV)]
Given families $\Aa_1,\ldots,\Aa_k$ of subsets of $\{1,\ldots,d\}$, each of size
$n$, the $k$-Orthogonal Vectors problem asks whether there exist $\veca_1 \in
\Aa_1, \ldots, \veca_k \in \Aa_k$ such that $\veca_1 \cap \veca_2 \cap \ldots
\cap \veca_k = \emptyset$.
\end{definition}

Naive algorithms for the $k$-Orthogonal Vectors problem run in $\Oh(d\cdot k\cdot n^k)$
time, and any $2^{o(d)}\cdot n^{k-\eps}$ time algorithm for $\eps > 0$ would
contradict SETH. As in the 2-OV case, for every fixed $k \ge 2$, a folklore
algorithm solves the $k$-Orthogonal Vectors problem in $\Oh(d\cdot 2^d\cdot n)$ time. We
show that the exponential dependence on $d$ can be improved:

\begin{restatable}{thm}{kov}
\label{thm:kov}
    For every $k \ge 2$, there exists $\eps_k > 0$ such that the $k$-Orthogonal
    Vectors problem can be solved deterministically in time $\Oh(2^{(1-\eps_k)d} \cdot n)$.
\end{restatable}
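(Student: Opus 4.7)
The plan is to prove Theorem~\ref{thm:kov} by a simple dichotomy on $n$ relative to $2^d$, targeting $\eps_k = 1/k$. The algorithm has two branches depending on whether $n$ is above or below $2^{\eps_k d}$.

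In the \emph{large $n$} branch ($n \ge 2^{\eps_k d}$), I would run a deterministic inclusion--exclusion algorithm achieving $\Ot(2^d + n)$ total time. For each $U \subseteq \{1,\ldots,d\}$ compute
\begin{equation*}
N(U) := \sum_{T \subseteq U} (-1)^{|T|} \prod_{i=2}^{k} m_i(T), \qquad m_i(T) := |\{\veca \in \Aa_i : T \subseteq \veca\}|,
\end{equation*}
which after a routine expansion equals the number of $(k-1)$-tuples $(\veca_2,\ldots,\veca_k)$ with $\bigcap_{i\ge 2} \veca_i \cap U = \emptyset$. The $m_i$'s are obtained by one super-set zeta transform per family in $\Ot(2^d)$, the pointwise product $M(T) = \prod_{i\ge 2} m_i(T)$ takes $\Ot(2^d k)$, and the signed Möbius transform produces $N$ on all $U$ in $\Ot(2^d)$. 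Declaring YES iff $N(\veca_1) > 0$ for some $\veca_1 \in \Aa_1$ costs $O(nd)$ further. Since $n \ge 2^{\eps_k d}$ yields $2^d \le 2^{(1-\eps_k)d}\, n$, the bound $\Ot(2^d + n)$ fits inside $\Ot(2^{(1-\eps_k)d}\, n)$.

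In the \emph{small $n$} branch ($n < 2^{\eps_k d}$), I would reduce $k$-OV to 2-OV by a meet-in-the-middle construction. Split the families into halves $G_1 = \{\Aa_1, \ldots, \Aa_{\lceil k/2 \rceil}\}$ and $G_2 = \{\Aa_{\lceil k/2 \rceil + 1}, \ldots, \Aa_k\}$, and enumerate the super-families
\begin{equation*}
\Bb_j := \Bigl\{ \bigcap_{i \in G_j} \veca_i : \veca_i \in \Aa_i \text{ for each } i \in G_j \Bigr\} \subseteq 2^{[d]}
\end{equation*}
by brute force in $\Ot(n^{\lceil k/2 \rceil})$ time (deduplicating via a hash on $\{0,1\}^d$). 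Because $\bigcap_{i=1}^k \veca_i = T_1 \cap T_2$ for $T_j := \bigcap_{i \in G_j} \veca_i$, the input is a YES instance of $k$-OV iff 2-OV on $(\Bb_1, \Bb_2)$ is a YES instance. Feeding $(\Bb_1, \Bb_2)$ to the deterministic 2-OV algorithm of Williams~\cite{williams24}, which runs in $\Ot(\alpha^d N)$ for some $\alpha \le 1.38$, yields total running time $\Ot(\alpha^d n^{k/2})$. The inequality $\alpha^d n^{k/2} \le 2^{(1-\eps_k)d} n$, combined with $n < 2^{\eps_k d}$, reduces to $\eps_k \cdot k/2 \le 1 - \log_2 \alpha$; for $\alpha = 1.38$ the right-hand side is about $0.535$, so $\eps_k = 1/k$ comfortably suffices for every $k \ge 2$.

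The main obstacle I anticipate is a clean accounting across the two regimes while remaining strictly deterministic. In particular, one must check that the enumeration cost $\Ot(n^{k/2})$ of the small-$n$ branch is absorbed: since $n^{k/2} \le 2^{\eps_k k d/2} = 2^{d/2}$ and $2^{(1-\eps_k)d} n \ge 2^{(1-1/k)d}$, the enumeration is always dominated by the target bound for $k \ge 2$. Both branches are purely deterministic (the inclusion--exclusion via zeta/Möbius, and Williams' 2-OV variant), so no derandomization is needed. A more elegant proof might unify the two branches via a single super-family construction that materialises intersections implicitly via inclusion--exclusion, but the dichotomy above already establishes the existence of $\eps_k = \Theta(1/k) > 0$ for every fixed $k \ge 2$, which is all the theorem requires.
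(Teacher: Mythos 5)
Your proof takes a genuinely different route from the paper's, and the overall strategy is sound. The paper also uses an inclusion--exclusion subroutine, but over the \emph{down-closures} $\dc\Aa_i$ rather than all of $2^{[d]}$, and its case distinction is on the cardinalities $\alpha_i d$ of a hypothetical solution rather than on $n$: if all $\alpha_i\le 1-\eps_k$ the down-closures are small and the counting algorithm wins; otherwise some family consists of sets of size $>(1-\eps_k)d$, hence has at most $\binom{d}{\eps_k d}$ members, which are enumerated before recursing on a $(k'-1)$-OV instance over a restricted universe, bottoming out at a deterministic $\Ot(2^{d/2}n)$ algorithm for $2$-OV. Your dichotomy on $n$ versus $2^{\eps_k d}$ replaces that recursion with a single meet-in-the-middle reduction to $2$-OV, and both of your branches check out: the identity $N(U)=\sum_{T\subseteq U}(-1)^{|T|}\prod_i m_i(T)$ does count the tuples whose common intersection avoids $U$, the zeta/M\"obius transforms are deterministic and run in $\Ot(2^d)$, and in the small-$n$ branch the correctness of the reduction to $2$-OV on $(\Bb_1,\Bb_2)$ is immediate. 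What your version buys is an \emph{explicit} $\eps_k=\Theta(1/k)$, which is quantitatively much better than the paper's recursively defined $\eps_k$ (constrained by $\eps_k\le\eps_{\ell-1}-h(\eps_k)(k-\ell+1)$, which decays far faster than $1/k$); what the paper's version buys is self-containedness (it never materializes all $2^d$ subsets and relies only on its own Lemma~\ref{thm:det-ov} and the down-closure counting lemma).

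One quantitative slip: in the small-$n$ branch the groups have size $\lceil k/2\rceil$, so the condition you need is $\eps_k\cdot\lceil k/2\rceil\le 1-\log_2\alpha$, not $\eps_k\cdot k/2\le 1-\log_2\alpha$. For odd $k$ your choice $\eps_k=1/k$ gives $\lceil k/2\rceil/k=(k+1)/(2k)$, which for $k=3$ is $2/3>1-\log_2 1.38\approx 0.535$, so the inequality fails. This does not threaten the theorem --- only the existence of some $\eps_k>0$ is claimed --- and is repaired by taking, say, $\eps_k=(1-\log_2\alpha)/\lceil k/2\rceil$ (or by using the paper's own deterministic $\Ot(2^{d/2}n)$ algorithm for $2$-OV with $\alpha=\sqrt2$ and $\eps_k=1/(2\lceil k/2\rceil)$). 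You should also note, as the paper does, that the residual $\poly(d)$ factors hidden in $\Ot$ are absorbed into the exponent by shrinking $\eps_k$ slightly, so that the final bound is a clean $\Oh(2^{(1-\eps_k)d}\cdot n)$.
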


To prove the theorem, we observe that the algorithm of Bj\"orklund et al.~\cite{bjorklund-ov} can
be generalized to give a $\Oh\big(d \cdot (|\dc \Aa_1| + \ldots + |\dc
\Aa_k|)\big)$-time algorithm for the $k$-Orthogonal Vectors problem, where $\dc
\Ff \coloneq \{\vecx \subseteq \vecs \mid \vecs \in \Ff\}$ denotes the
down-closure of the family $\Ff$.  First, guess the cardinalities of the solution $1 \le
\alpha_1 d \le \ldots\le\alpha_k d \le d$ and assume that the family $\Aa_i$ contains only sets of
cardinality $\alpha_i d$.  If $\alpha_k < (1-\eps)$ for some $\eps > 0$, then $|\dc \Aa_i| \leq 2^{(1-\eps)d}
|\Aa_i|$ for each $i \in \{1,\ldots,k\}$ and thus the
above algorithm runs in time $\Oh(2^{(1-\eps)d} \cdot n)$.
Therefore, consider the case where $\alpha_k > (1 - \eps)$. 
Then $|\Aa_k| \le \binom{d}{\eps d}$. Hence, we can afford to guess a set
$\veca_k \in \Aa_k$ and recurse on the $(k-1)$-Orthogonal Vectors problem with universe $\veca_k$. We
continue this process until we reach the case $k = 2$, at which point we can use
a deterministic algorithm for $k=2$.

Finally, we observe that \cref{thm:kov} is asymptotically optimal: having a
$2^{(1-\eps)d} \cdot n^{\Oh(1)}$-time algorithm that solves $k$-Orthogonal Vectors for every $k \ge 2$
would drastically improve the currently best algorithms for Set Cover.

\begin{restatable}{thm}{scclowerbound}\label{thm:lowerBound}
    For every $c > 0$ and $\eps > 0$, there exists $k \in \nat$ such that the
    $k$-Orthogonal Vectors problem is not solvable in $\Oh(2^{(1-\eps)d} \cdot n^c)$ time,
    assuming the Set Cover Conjecture.
\end{restatable}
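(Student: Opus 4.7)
The plan is to observe that the $k$-Orthogonal Vectors problem is equivalent to $k$-Set Cover after complementing every input set, and then invoke the Set Cover Conjecture directly. Recall that the Set Cover Conjecture asserts that for every $\eps > 0$ there exists an integer $k = k(\eps) \ge 2$ such that no algorithm decides $k$-Set Cover on a universe of size $d$ with a family of $n$ sets in time $\Oh((2-\eps)^d \cdot n^{\Oh(1)})$. Since $(2-\eps)^d = 2^{(1-\eps')d}$ for some $\eps' = \eps'(\eps) > 0$, this is equivalent to ruling out $\Oh(2^{(1-\eps')d} \cdot n^{\Oh(1)})$-time algorithms for $k$-Set Cover, which is the form I shall use.

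The reduction itself is immediate from De Morgan's law. Given an instance $(U, \mathcal{F})$ of $k$-Set Cover with $|U| = d$ and $|\mathcal{F}| = n$, define the $k$-OV instance by setting $\Aa_1 = \Aa_2 = \cdots = \Aa_k := \{U \setminus S : S \in \mathcal{F}\}$. The construction preserves both $d$ and $n$ exactly, and runs in $\Oh(nd)$ time. For any $S_1, \ldots, S_k \in \mathcal{F}$ we have
\[
  \bigcup_{i=1}^{k} S_i \;=\; U
  \quad \Longleftrightarrow \quad
  \bigcap_{i=1}^{k} (U \setminus S_i) \;=\; \emptyset,
\]
so the $k$-OV instance admits a solution if and only if $(U, \mathcal{F})$ has a set cover of size exactly $k$; this is equivalent to having a cover of size at most $k$ by padding with a repeated set.

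Given $c > 0$ and $\eps > 0$ as in the theorem statement, I would apply the Set Cover Conjecture in its $2^{(1-\eps)d}$ formulation to obtain an integer $k$ for which $k$-Set Cover cannot be solved in time $\Oh(2^{(1-\eps)d} \cdot n^c)$. A hypothetical $\Oh(2^{(1-\eps)d} \cdot n^c)$-time algorithm for $k$-Orthogonal Vectors, composed with the $\Oh(nd)$ reduction above, would yield an algorithm for $k$-Set Cover within the same time bound, contradicting the conjecture.

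There is essentially no technical obstacle here: the statement is morally an observation that the two problems coincide under complementation of the inputs, and the reduction is a one-line construction. The only routine care required is to translate between the $(2-\eps)^d$ and $2^{(1-\eps)d}$ formulations of the Set Cover Conjecture and to reconcile the \emph{exactly $k$} versus \emph{at most $k$} conventions for the covering version of the problem.
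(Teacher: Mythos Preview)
Your proposal is correct and follows essentially the same approach as the paper: both complement every set in the Set Cover instance to obtain a $k$-OV instance with $\Aa_1 = \cdots = \Aa_k$, use De Morgan's law to argue equivalence, and invoke the Set Cover Conjecture directly. Your additional remarks about reconciling the $(2-\eps)^d$ versus $2^{(1-\eps)d}$ formulations and the exactly-$k$ versus at-most-$k$ conventions are careful touches that the paper leaves implicit.
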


Recently, it has been shown that the \emph{Set Cover Conjecture} and the \emph{Asymptotic
Rank Conjecture} cannot both be true~\cite{bjorklund24,pratt24}.  
We show that these techniques can also be used to conditionally improve algorithms for $k$-Orthogonal Vectors: 

\begin{restatable}{thm}{arcalg}\label{thm:arcalg}
    Assuming the Asymptotic Rank Conjecture, there exist $c > 0$ and $\eps > 0$, such that
    any $k$-Orthogonal Vectors instance can be solved in $\Oh(2^{(1-\eps)\cdot d} \cdot n^c)$ time for every $k \in \nat$.
\end{restatable}

As these results are relatively straightforward corollaries of the known connections, we include them in~\cref{app:SetCover}.

\paragraph*{Related Work} As mentioned before, techniques used in this paper
are inspired by~\cite{stoc21} which in turn is based on the more general
framework of Fomin et al.~\cite{fomin-jacm2016}. These techniques have also
been used to give a faster algorithm for Subset Balancing Problems~\cite{randolph25}.
Similar techniques have also been used by Chukhin et al.~\cite{DBLP:journals/eccc/ChukhinKMS25} in the context of monotone
circuits and matrix rigidity.

\paragraph{Organization} After introducing notations and discussing a folklore algorithm in \cref{sec:preliminaries}, we give our algorithm for the Orthogonal Vectors problem in \Cref{sec:2OV}. In \Cref{sec:kOV} we give our algorithm for the $k$-Orthogonal Vectors problem. Our lower bound for $k$-Orthogonal Vectors is provided in \Cref{app:SetCover}, as the proof is straightforward.

\section{Preliminaries}\label{sec:preliminaries}



We use the shorthand notation $[n] \coloneq \{1, 2, \dots, n\}$ for any $n \in \mathbb N$. 
For a universe set $\vecu$ and a set family $\Ff \subseteq 2^{\vecu}$, we denote
the down-closure of $\Ff$ as $\dc \Ff \coloneq \{ \vecx \subseteq \vecs \mid
\vecs \in \Ff\}$. We use $\vecs
\subseteq_R \Ff$ to denote that $\vecs$ is a subset of $\Ff$ chosen uniformly at random among all such subsets.
For a non-negative integer $i \le |\vecu|$, the set of subsets of $\vecu$ of size $i$ is denoted $\binom{\vecu}{i}$.
We use $\vecx
\sqcup \vecy$ to denote a disjoint partition. 
We use the Iverson bracket notation for the
characteristic function; i.e.\ for any logical expression $b$, the value of
$\iv{b}$ is $1$ if $b$ is true and $0$ otherwise.

\paragraph{Binomial Coefficients}
The binary entropy function is defined as $h(\alpha) \coloneq -\alpha
\log_2(\alpha)-(1-\alpha)\log_2(1-\alpha)$ for every $\alpha \in (0,1)$, and
$h(\alpha) = 0$ for $\alpha \in \{0,1\}$. We use it to approximate binomial
coefficients with the following inequalities~\cite{robbins1955remark}:
\begin{equation}\label{eq:binom_apx}
    2^{h(\alpha) n} \ge \binom{n}{\alpha n} \ge \Omega\left(2^{h(\alpha) n} \cdot n^{-1/2}\right),
\end{equation}
for every $\alpha \in (0,1)$. The binary entropy function generalizes to the multinomial form. For $\alpha_1,\ldots,\alpha_k \in (0,1)$ with $\sum_{i=1}^k \alpha_i = 1$ it is defined as:
\begin{displaymath}
	h(\alpha_1,\ldots,\alpha_k) \coloneqq -\sum_{i=1}^k \alpha_i \log_2 \alpha_i.
\end{displaymath}
Note that for $\alpha \in (0,1)$, we use the shorthand notation
$h(\alpha) \coloneqq h(\alpha,1-\alpha)$.
The multinomial coefficient can be approximated with the function $h$ as follows:
	
\begin{lemma}[\cite{csiszar2004information}, Lemma 2.2]\label{lem:multvsentropy}
	\[
		 2^{h(\alpha_1,\ldots,\alpha_k)n} \cdot n^{-\Oh(k)}\leq \binom{n}{\alpha_1 n,\ldots,\alpha_k n} \leq 2^{h(\alpha_1,\ldots\alpha_k)n}.
	\]
\end{lemma}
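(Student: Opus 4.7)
The plan is to establish the two bounds by thinking of the multinomial coefficient as the number of sequences in $[k]^n$ with a prescribed type, i.e., with exactly $\alpha_i n$ occurrences of symbol $i$ for each $i \in [k]$. This probabilistic framing handles both directions cleanly and is the standard route for the $\Ot$-flavored bounds needed throughout the paper.

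For the upper bound, the plan is to consider the product distribution on $[k]^n$ in which each coordinate is independently equal to $i$ with probability $\alpha_i$. Any particular sequence whose type is $(\alpha_1 n, \ldots, \alpha_k n)$ has probability exactly $\prod_{i=1}^k \alpha_i^{\alpha_i n} = 2^{-h(\alpha_1,\ldots,\alpha_k) n}$ under this distribution. Summing this identical probability over all $\binom{n}{\alpha_1 n, \ldots, \alpha_k n}$ such sequences, and using that the total mass is at most $1$, gives
\[
\binom{n}{\alpha_1 n, \ldots, \alpha_k n} \cdot 2^{-h(\alpha_1,\ldots,\alpha_k) n} \le 1,
\]
which is the desired upper bound.

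For the lower bound I would reuse the same product distribution, and argue that among all possible type vectors $(n_1,\ldots,n_k)$ with $\sum_i n_i = n$ and $n_i \ge 0$, the specific type $(\alpha_1 n, \ldots, \alpha_k n)$ is (up to integrality issues) the one of maximum probability. Since the number of possible types is at most $\binom{n+k-1}{k-1} \le (n+1)^{k-1}$, the most likely type has probability at least $(n+1)^{-(k-1)}$. Translating this back, the probability of the type $(\alpha_1 n, \ldots, \alpha_k n)$ equals $\binom{n}{\alpha_1 n, \ldots, \alpha_k n}\cdot 2^{-h(\alpha_1,\ldots,\alpha_k) n}$, which yields
\[
\binom{n}{\alpha_1 n, \ldots, \alpha_k n} \ge (n+1)^{-(k-1)} \cdot 2^{h(\alpha_1,\ldots,\alpha_k) n} = n^{-\Oh(k)} \cdot 2^{h(\alpha_1,\ldots,\alpha_k) n}.
\]

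The only mildly delicate step is justifying that the type $(\alpha_1 n,\ldots,\alpha_k n)$ is indeed the most likely (the mode of the multinomial), which I would handle by a short ratio-test argument: moving one unit of mass from coordinate $i$ to coordinate $j$ in the count vector multiplies the probability by $\frac{n_i}{n_j+1}\cdot \frac{\alpha_j}{\alpha_i}$, and this ratio is $\le 1$ precisely when the counts already match the $\alpha_i n$ proportions. I expect this mode-identification to be the main (but still routine) obstacle; once it is in place, both inequalities drop out immediately. As an alternative, if one prefers to avoid the mode argument, Stirling's approximation $m! = \sqrt{2\pi m}(m/e)^m (1+O(1/m))$ applied to $n!$ and to each $(\alpha_i n)!$ gives the lower bound directly, with the $n^{-\Oh(k)}$ factor coming from the $k$ square-root corrections in the denominator.
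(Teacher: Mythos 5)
Your proof is correct: it is precisely the standard method-of-types argument from the cited source (Csisz\'ar, ``The method of types''), which the paper itself does not reprove but only cites. Both the upper bound via the product measure and the lower bound via counting types and identifying the mode (or, as you note, via Stirling) are sound, so there is nothing to add.
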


\paragraph{Orthogonal Vectors}
Throughout the paper, we denote by $n$ the number of vectors given in each family of an OV instance and by $d$ the dimension of the vectors. Vectors in $\{0, 1\}^d$ are often interpreted as subsets of $[d]$, and we use these terms interchangeably. We always assume that $d = c \log n$ for some constant $c >0$. 
For completeness, we include a simple deterministic algorithm for 2-OV. Notice
that~\cite{williams24} offers a faster deterministic algorithm. However, the running time proven below is enough for the base case of the recursive algorithm in \cref{thm:kov}.

\begin{lemma} \label{thm:det-ov}
    The Orthogonal Vectors problem can be solved deterministically in $\Ot(2^{d/2} \cdot n)$ time.
\end{lemma}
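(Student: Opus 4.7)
The plan is a simple meet-in-the-middle argument, which I would split into two regimes depending on the size of $n$ relative to $2^{d/2}$.

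\textbf{Regime 1 ($n \le 2^{d/2}$).} I would simply run the brute force algorithm: iterate over all pairs $(\veca,\vecb) \in \Aa \times \Bb$ and test $\veca \cap \vecb = \emptyset$ in $\Oh(d)$ time. The total running time is $\Oh(n^2 \cdot d) \le \Oh(n \cdot 2^{d/2} \cdot d) = \Ot(2^{d/2} \cdot n)$.

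\textbf{Regime 2 ($n > 2^{d/2}$).} Here I would use a zeta (subset-sum) transform. Define $f : 2^{[d]} \to \{0,1\}$ by
\[
    f(\vecs) \coloneq \iv{\exists \veca \in \Aa : \veca \subseteq \vecs}.
\]
The whole table $f$ can be computed deterministically in $\Oh(2^d \cdot d)$ time by initializing $f(\vecs) = \iv{\vecs \in \Aa}$ and then performing the standard $d$-dimensional OR-zeta transform (one coordinate at a time). Having precomputed $f$, for each $\vecb \in \Bb$ the answer reduces to a single table lookup $f([d] \setminus \vecb)$, which certifies whether some $\veca \in \Aa$ is disjoint from $\vecb$. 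This yields total time $\Oh(2^d \cdot d + n)$, and under the assumption $n > 2^{d/2}$ we have $2^d \cdot d = 2^{d/2} \cdot 2^{d/2} \cdot d \le 2^{d/2} \cdot n \cdot d$, so the running time is $\Ot(2^{d/2} \cdot n)$.

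\textbf{Main obstacle.} There is essentially none: both subroutines are textbook. The only thing to verify is that the two regimes meet cleanly at the boundary $n \approx 2^{d/2}$, which the arithmetic above confirms. A fully unified meet-in-the-middle presentation (splitting $[d]=L\sqcup R$ with $|L|=\lceil d/2\rceil$, grouping $\Aa$ by $\veca\cap L$, and applying a one-sided zeta transform on each group) is also possible and gives the same bound, but the case split above is the shortest deterministic proof and suffices for the base case of \cref{thm:kov}.
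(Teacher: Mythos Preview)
Your proof is correct but takes a genuinely different route from the paper. The paper gives a single unified meet-in-the-middle algorithm: it splits $[d]$ into a low half $\vecu_L$ and a high half $\vecu_H$, and for each $\veca=\veca_L\uplus\veca_H\in\Aa$ enumerates all $\veca_L\uplus\vecx_H$ with $\vecx_H\subseteq\vecu_H$ disjoint from $\veca_H$ (collecting these in a set $\Ss_A$); a symmetric set $\Ss_B$ is built from $\Bb$, and one checks $\Ss_A\cap\Ss_B\neq\emptyset$. Both $\Ss_A$ and $\Ss_B$ have size at most $2^{d/2}\cdot n$ unconditionally, so no case distinction on $n$ versus $2^{d/2}$ is needed. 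Your argument instead balances the two folklore algorithms ($\Oh(n^2 d)$ brute force and the $\Oh(2^d d+n)$ zeta-transform algorithm) against each other via a threshold at $n=2^{d/2}$; this is shorter and entirely elementary, at the cost of not being a single algorithm. The unified split-universe approach you sketch in your closing remark is essentially what the paper does. Either route is perfectly adequate for the base case of \cref{thm:kov}.
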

\begin{proof}
    Partition the universe $[d]$ into the low order bits $\vecu_L = \{1, \dots,
    \lfloor d/2 \rfloor\}$ and high order bits $\vecu_H = \{\lfloor d/2 \rfloor + 1, \dots, d\}$.
    For any subset $\vecx \subset [d]$, we write $\vecx = \vecx_L \uplus
    \vecx_H$ for $\vecx_L = \vecx \cap \vecu_L$ and $\vecx_H = \vecx \cap \vecu_H$.
    Consider the following sets of vectors:
    \begin{align*}
    \Ss_A \coloneqq &\left\{ \veca_L \uplus \vecx_H \subseteq [d] \ \mid \ 
    \veca_L \uplus \veca_H \in \Aa,\, 
    \vecx_H \subseteq \vecu_H,\,
    \vecx_H \cap \veca_H = \emptyset
    \right\},\\
        \Ss_B \coloneqq &\left\{ \vecx_L \uplus \vecb_H \subseteq [d] \ \mid \ 
        \vecb_L \uplus \vecb_H \in \Bb,\, 
        \vecx_L \subseteq \vecu_L,\,
        \vecx_L \cap \vecb_L = \emptyset
    \right\}.
    \end{align*}
    We claim that $\Ss_A \cap \Ss_B \neq \emptyset$ if and only if an orthogonal pair exists.
    Assume that there exist orthogonal vectors $\veca_L \uplus
    \veca_H \in \Aa$ and $\vecb_L \uplus \vecb_H \in \Bb$. Then $\veca_L \uplus \vecb_H$ is contained in both $\Ss_A$ and $\Ss_B$.
    For the converse, assume that $\vecx_L \uplus \vecx_H \in \Ss_A \cap \Ss_B$.
    Since $\vecx_L \uplus \vecx_H \in \Ss_A$, there exists $\veca_L \uplus \veca_H
    \in \Aa$ such that $\vecx_L = \veca_L$ and $\vecx_H \cap \veca_H =
    \emptyset$. Similarly, since $\vecx_L \uplus \vecx_H \in \Ss_B$, there exists
    $\vecb_L \uplus \vecb_H \in \Bb$ such that $\vecb_H = \vecx_H$ and $\vecb_L \cap \vecx_L = \emptyset$.
    Hence, $\veca_L \cap \vecb_L = \emptyset$ and $\veca_H \cap \vecb_H = \emptyset$.

    Observe that both $|\Ss_A|$ and $|\Ss_B|$ are at most $2^{d/2} \cdot (|\Aa| + |\Bb|)$, so in $\Ot(2^{d/2} \cdot n)$ time we can construct $\Ss_A$ and $\Ss_B$, and check whether $\Ss_A \cap \Ss_B \neq \emptyset$.
\end{proof}

\section{Algorithm for Orthogonal Vectors ($2$-OV)}\label{sec:2OV}

\twoov*

The algorithm of \cref{thm:2ov} first randomly partitions the universe $\vecu = [d]$
into $\ell$ sets $\vecu_1, \dots, \vecu_\ell$ of equal size and samples random families
$\Cc_i$ of subsets of $\vecu_i$. We will show that if there exists $\veca \in \Aa$ and
$\vecb \in \Bb$ such that $\veca \cap \vecb = \emptyset$, then with sufficiently high
probability the families $\Cc_1, \ldots, \Cc_\ell$ contain a witness of the
existence of $\veca$ and $\vecb$. 
Finally, we introduce auxiliary families $\Ll^i \subset \Cc_i$ and $\Rr^i \subset \Cc_i$ that allow us to efficiently find this witness, if it exists. 

In the following, we assume that there exist $\veca \in \Aa$ and $\vecb \in \Bb$
with $\veca \cap \vecb = \emptyset$. Let $\alpha,\beta \in [0,1]$ be real numbers
such that $|\veca| = \alpha \cdot d$ and $|\vecb| = \beta \cdot d$.  Note that we can guess the values of $\alpha$ and $\beta$ in $\poly(d)$ time. Hence, from now on, we assume that we know the values of $\alpha$ and $\beta$ precisely. Moreover, we can assume without loss of generality that $\alpha \le \beta$ (by swapping $\Aa$ and $\Bb$) and $\alpha+\beta \le 1$ (as otherwise, the answer is trivially negative).

\subsection{Partition of the universe}\label{sec:preprocessing}
Let $\ell$ be a sufficiently large constant (for the sake of presentation we use $\ell=100$).
By padding, we can assume that $d$,$\alpha d$ and $\beta d$ are multiples of $\ell$. Consider a random
partition of the universe $\vecu = [d]$ into $\ell$ parts
$\vecu_1,\ldots,\vecu_\ell$ of equal cardinality such that $[d] = \vecu_1 \sqcup
\ldots \sqcup \vecu_\ell$ and $|\vecu_i| = d/\ell$ for every $i \in [\ell]$.
We observe that with sufficiently high probability, the solution is partitioned
equally among $\vecu_i$.

\begin{observation}\label{obs:support}
    Let $\veca \in \Aa$ and $\vecb \in \Bb$ with
    $|\veca| = \alpha d$ and $|\vecb| = \beta d$.
    With probability at least $d^{-\Oh(\ell)}$, for every $i \in [\ell]$ it holds that:
	\begin{equation*}\label{eq:support}
        |\vecu_i \cap \veca| =  \frac{\alpha d}{\ell} \text{\quad and \quad }
        |\vecu_i
        \cap \vecb| = \frac{\beta d}{\ell}
        .
    \end{equation*}
\end{observation}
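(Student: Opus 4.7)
The plan is to compute the probability in \eqref{eq:support} exactly as a ratio of multinomial coefficients, and then bound that ratio using \cref{lem:multvsentropy}.

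First I would observe that a uniformly random ordered partition of $[d]$ into $\ell$ parts of size $d/\ell$ is chosen from a sample space of size $\binom{d}{d/\ell, \ldots, d/\ell}$. To count the favorable partitions satisfying \eqref{eq:support}, I would use that the sets $\veca$, $\vecb$, and $[d] \setminus (\veca \cup \vecb)$ are disjoint and that \eqref{eq:support} forces each of them to be distributed into the parts $\vecu_1, \ldots, \vecu_\ell$ with prescribed block sizes $\alpha d/\ell$, $\beta d/\ell$, and $(1-\alpha-\beta)d/\ell$ respectively. Independently choosing these three sub-partitions and taking the union yields exactly the favorable partitions, giving
\[
\Prob{\eqref{eq:support}} \;=\; \frac{\binom{\alpha d}{\alpha d/\ell,\ldots,\alpha d/\ell}\,\binom{\beta d}{\beta d/\ell,\ldots,\beta d/\ell}\,\binom{(1-\alpha-\beta)d}{(1-\alpha-\beta)d/\ell,\ldots,(1-\alpha-\beta)d/\ell}}{\binom{d}{d/\ell,\ldots,d/\ell}}.
\]

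Next I would apply \cref{lem:multvsentropy} to each of the four multinomial coefficients. Using $h(1/\ell,\ldots,1/\ell) = \log_2 \ell$, the denominator is at most $\ell^d$, while each of the three numerator factors is at least $\ell^{\alpha d}\cdot d^{-\Oh(\ell)}$, $\ell^{\beta d}\cdot d^{-\Oh(\ell)}$, and $\ell^{(1-\alpha-\beta)d}\cdot d^{-\Oh(\ell)}$ respectively. Their product is $\ell^{d} \cdot d^{-\Oh(\ell)}$, so the ratio is at least $d^{-\Oh(\ell)}$, as claimed.

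The main obstacle, and really the only nuisance, is handling the boundary cases $\alpha \in \{0,1\}$ or $\beta \in \{0,1\}$ or $\alpha+\beta=1$, since \cref{lem:multvsentropy} is stated for $\alpha_i \in (0,1)$. Each such case is trivial to dispatch separately: if one of the three sets is empty, its multinomial factor is $1$, and one simply applies the bound to the remaining (non-degenerate) factors; the computation then goes through unchanged. A second minor point is divisibility: one should assume the guessed $\alpha, \beta$ are such that $\alpha d/\ell$ and $\beta d/\ell$ are integers, which can be arranged by rounding to the nearest feasible value on a $\poly(d)$-size grid without affecting the running time, so I would just remark that divisibility is assumed throughout.
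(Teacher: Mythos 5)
Your proposal is correct, and it uses the same core machinery as the paper: count the favorable equal-size ordered partitions via multinomial coefficients and compare against the total count using the entropy approximation of \cref{lem:multvsentropy}, with $h(1/\ell,\ldots,1/\ell)=\log_2\ell$ collapsing everything to $\ell^d$ up to $d^{-\Oh(\ell)}$ factors. The one genuine difference is in the decomposition: the paper bounds the probability of the event for $\veca$ alone, writing the favorable count as $\binom{|\veca|}{\alpha d/\ell,\ldots}\cdot\binom{|\vecu\setminus\veca|}{(1-\alpha)d/\ell,\ldots}$, and then declares the reasoning for $\vecb$ ``symmetric,'' leaving implicit how the two non-independent, only inverse-polynomially-likely events are combined. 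You instead compute the joint probability directly via a three-way refinement into $\veca$, $\vecb$, and $[d]\setminus(\veca\cup\vecb)$ (valid precisely because $\veca$ and $\vecb$ are disjoint, which you correctly invoke), so the single ratio of multinomials already bounds the conjunction of both conditions in \eqref{eq:support}. This buys you a cleaner and strictly more rigorous argument at essentially no extra cost; your remarks on the degenerate cases $\alpha\in\{0,1\}$, $\beta\in\{0,1\}$, $\alpha+\beta=1$ and on divisibility are minor housekeeping that the paper also glosses over.
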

\begin{proof}
    We focus on bounding the probability that $|\vecu_i \cap \veca| = \frac{\alpha
    d}{\ell}$ for every $i \in [\ell]$, as the reasoning for $\vecb$ is symmetric. 
    For fixed integers $k_1, \ldots, k_\ell \geq 0$, 
    the number of partitions of $\vecu$ into $\ell$ equal size sets such that
    $|\vecu_i
    \cap \veca| = k_i$ for every $i \in [\ell]$ is 
    \begin{displaymath} 
		\binom{|\veca|}{k_1,\ldots,k_\ell} \cdot \binom{|\vecu \setminus \veca|}{d / \ell - k_1,\ldots,d / \ell - k_\ell}.
	\end{displaymath}
    Since $|\veca| = \alpha d$, the number of partitions of $\vecu$ into $\ell$
    equal size sets such that $|\vecu_i \cap \veca| = \frac{\alpha d}{\ell}$ for every $i \in [\ell]$ is 
	\begin{displaymath}
		\binom{\alpha d}{\alpha d/\ell,\ldots,\alpha d/\ell} \cdot \binom{(1-\alpha) d}{(1-\alpha) d/\ell,\ldots,(1-\alpha)d/\ell},
	\end{displaymath}
	which can be lower bounded using \cref{lem:multvsentropy} by
	\begin{displaymath}
		2^{h(1/\ell,\ldots,1/\ell) d} \cdot d^{-\Oh(\ell)} = 2^{(\log_2{\ell}) \cdot d} \cdot d^{-\Oh(\ell)} = \ell^d  d^{-\Oh(\ell)}.
	\end{displaymath}
	On the other hand, the number of partitions of $\vecu$ into $\ell$ equal size sets is at most $\ell^{d}$. Hence
	\begin{displaymath}
		\Prob{|\vecu_i \cap \veca| = \frac{\alpha d}{\ell} \text{ for every } i \in [\ell]} \ge 
		\ell^d d^{-\Oh(\ell)} \cdot \ell^{-d} = d^{-\Oh(\ell)}.
		\qedhere
	\end{displaymath}
\end{proof}

\subsection{Certificate of orthogonality}\label{sec:cert_orthogonality}

Let $\lambda = \lambda(\alpha,\beta) \in [0,1]$ and $\kappa = \kappa(\alpha,\beta)$ be parameters to be tuned later that only depend on $\alpha$ and $\beta$.
For every $i \in [\ell]$, we randomly draw a family $\Cc_i$ of $2^{\kappa |\vecu_i|} \cdot d^c$
subsets of $\vecu_i$ of size $\lambda |\vecu_i|$, for some sufficiently large constant $c$, i.e.\ we let 
\begin{equation}\label{eq:candidates}
	\Cc_i \subseteq_R \binom{\vecu_i}{\lambda \cdot |\vecu_i|} \text{ such that
    } |\Cc_i| = 2^{\kappa \cdot |\vecu_i|} \cdot d^c.
\end{equation}

\begin{definition}[Certificate of Orthogonality]
    A tuple $(\vecc_1,\ldots,\vecc_\ell) \in \Cc_1 \times \cdots \times \Cc_\ell$ is a \emph{certificate of orthogonality} if 
    \begin{enumerate}[label=(\alph*)]
        \item There exists $\veca \in \Aa$ such that for every $i \in [\ell]$ it
            holds that $\veca \cap \vecu_i \subseteq \vecc_i$, and\label{em:subs}
        \item There exists $\vecb \in \Bb$ such that for every $i \in [\ell]$ it
            holds that $\vecb \cap \vecc_i = \emptyset$.\label{em:disj}
    \end{enumerate}
\end{definition}

Clearly, if there exists a certificate of orthogonality, then there exists
$\veca \in \Aa$ and $\vecb \in \Bb$ with $\veca \cap \vecb = \emptyset$. However, the converse is
not necessarily true. We prove that for sufficiently many sets in each $\Cc_i$
(i.e. large enough parameter $\kappa$), if there exists $\veca \in \Aa$ and $\vecb \in \Bb$
with $\veca \cap \vecb = \emptyset$, then with sufficiently high probability there exists a certificate of orthogonality in $(\Cc_1, \dots, \Cc_\ell)$. 

\begin{lemma}\label{lem:candidates}
    Assume that there exist disjoint $\veca \in \Aa$ and $\vecb \in \Bb$ with
    $|\veca| = \alpha d$ and $|\vecb| = \beta d$. If
    \begin{align*}
        \lambda & \in [\alpha, 1 - \beta] \text{ and } \\ 
        \kappa & \ge h(\lambda) - (1-\alpha-\beta) \cdot h\left(\frac{\lambda - \alpha}{1-\alpha-\beta}\right),
    \end{align*}
    then with probability at least $d^{-\Oh(\ell)}$ there exists a certificate of orthogonality in $\Cc_1\times \cdots \times \Cc_\ell$.
\end{lemma}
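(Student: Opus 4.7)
The plan is to condition on the favorable event from \cref{obs:support} and then to show that, on this event, each family $\Cc_i$ independently contains a suitable witness with overwhelming probability. Throughout I write $m := |\vecu_i| = d/\ell$ and $\gamma := (\lambda - \alpha)/(1 - \alpha - \beta)$; the hypothesis $\lambda \in [\alpha, 1 - \beta]$ guarantees $\gamma \in [0, 1]$, so that $h(\gamma)$ is well-defined.

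Call $\vecc \in \binom{\vecu_i}{\lambda m}$ \emph{good for part $i$} if $\veca \cap \vecu_i \subseteq \vecc$ and $\vecc \cap \vecb = \emptyset$; a certificate of orthogonality exists precisely when every $\Cc_i$ contains at least one good set for its part $i$. Condition on the event of \cref{obs:support}, which has probability at least $d^{-\Oh(\ell)}$ and is independent of the random draws of $\Cc_1, \ldots, \Cc_\ell$. Under this conditioning, $\veca \cap \vecu_i$ and $\vecb \cap \vecu_i$ are disjoint subsets of $\vecu_i$ of sizes $\alpha m$ and $\beta m$, so a good set for part $i$ is determined by freely selecting the remaining $(\lambda - \alpha) m$ elements from the $(1 - \alpha - \beta) m$ elements of $\vecu_i$ outside $\veca \cup \vecb$. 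Hence the probability that a uniformly random element of $\binom{\vecu_i}{\lambda m}$ is good equals
$$p_i \;=\; \frac{\binom{(1-\alpha-\beta) m}{(\lambda-\alpha) m}}{\binom{m}{\lambda m}} \;\geq\; \frac{1}{\poly(d)} \cdot 2^{-m\left[h(\lambda) - (1-\alpha-\beta)\, h(\gamma)\right]} \;\geq\; \frac{1}{\poly(d)} \cdot 2^{-\kappa m},$$
where the first inequality applies \eqref{eq:binom_apx} (lower bound to the numerator, upper bound to the denominator) and the second inequality uses the hypothesis on $\kappa$.

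Since $\Cc_i$ is drawn uniformly at random from subsets of $\binom{\vecu_i}{\lambda m}$ of size $|\Cc_i| = 2^{\kappa m} \cdot d^c$, a standard sampling-without-replacement argument bounds the probability that $\Cc_i$ misses every good set by $(1 - p_i)^{|\Cc_i|} \leq \exp(-p_i |\Cc_i|) \leq \exp(-d^{\Omega(1)})$, provided $c$ is chosen large enough to dominate the polynomial slack in $p_i$. A union bound over $i \in [\ell]$ then shows that, conditional on \cref{obs:support}, a certificate of orthogonality exists with probability at least $1 - \ell \cdot \exp(-d^{\Omega(1)}) \geq 1/2$; multiplying by the unconditional probability $d^{-\Oh(\ell)}$ of \cref{obs:support} yields the claim. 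The main obstacle is the bookkeeping of polynomial factors coming from \eqref{eq:binom_apx}: the factor $d^c$ in the definition \eqref{eq:candidates} of $\Cc_i$ is introduced precisely to absorb this slack, so that the expected number of good sets in $\Cc_i$ grows faster than any fixed polynomial in $d$ and the concentration bound above kicks in.
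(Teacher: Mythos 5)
Your proof is correct and follows essentially the same route as the paper's: condition on the balanced-partition event of \cref{obs:support}, lower-bound the probability $p_i$ that a uniform $\vecc \in \binom{\vecu_i}{\lambda|\vecu_i|}$ contains $\veca\cap\vecu_i$ and avoids $\vecb\cap\vecu_i$ via \eqref{eq:binom_apx}, and use $|\Cc_i|\ge 1/p_i$ to conclude each part succeeds. The only (harmless) difference is quantitative: by exploiting the $d^c$ slack you get per-part failure probability $\exp(-d^{\Omega(1)})$ and a union bound, whereas the paper settles for a constant per-part success probability and multiplies over the $\ell$ parts; both yield the claimed $d^{-\Oh(\ell)}$.
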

\begin{proof}
    We condition on the event that $|\veca \cap \vecu_i| = \alpha |\vecu_i|$ and
    $|\vecb \cap \vecu_i| = \beta |\vecu_i|$ for all $i \in [\ell]$. 
    By \cref{obs:support}, this happens with probability at least $d^{-\Oh(\ell)}$. 

    Fix $i \in [\ell]$. First, we bound the probability that a set $\vecc$ sampled uniformly at random from  $\binom{\vecu_i}{\lambda |\vecu_i|}$ satisfies $\veca \cap
    \vecu_i
    \subseteq \vecc$ and $\vecb \cap \vecc = \emptyset$. 
    The number of sets $\vecc \subseteq  \binom{\vecu_i}{\lambda |\vecu_i|}$ that satisfy both properties is
    \begin{displaymath}
        \binom{(1-\beta-\alpha) |\vecu_i|}{(\lambda - \alpha) |\vecu_i|}
        .
    \end{displaymath}
    Hence, since $|\vecu_i| = d /\ell$ and using \cref{eq:binom_apx}, the probability
    that $\vecc \subseteq_R \binom{\vecu_i}{\lambda |\vecu_i|}$
    satisfies properties both properties is at least
    \begin{displaymath}
		\binom{(1-\alpha-\beta) |\vecu_i|}{(\lambda - \alpha) |\vecu_i|} \cdot
        \binom{|\vecu_i|}{\lambda |\vecu_i|}^{-1}
        \ge
        2^{\left((1-\alpha-\beta) \cdot h\left(\frac{\lambda - \alpha}{1-\alpha-\beta}\right) - h(\lambda) \right)\cdot d / \ell } \cdot d^{-\Oh(1)} 
    \end{displaymath}
    Hence if $\kappa \ge h(\lambda) - (1-\alpha-\beta) \cdot h\left(\frac{\lambda - \alpha}{1-\alpha-\beta}\right)$, then $|\Cc_i| \geq 1/p$ for $p \coloneq 2^{\left((1-\alpha-\beta) \cdot h\left(\frac{\lambda - \alpha}{1-\alpha-\beta}\right) - h(\lambda) \right)\cdot d / \ell}~\cdot~d^{-\Oh(1)}$. Therefore, the probability that
    there exists $\vecc\in \Cc_i$ satisfying both properties is at least $1 - (1-p)^{1/p} \ge 1-e^{-1}$.
    
    Finally, the probability that for every $i \in [\ell]$ there exists $\vecc \in \Cc_i$ such that \ref{em:subs} and \ref{em:disj} hold is at least $(1-2^{-1})^\ell$ which is larger than success probability of~\cref{obs:support} which is $d^{-\Oh(\ell)}$.
    %
\end{proof}

\subsection{Auxiliary Families}
In order to efficiently check if there exists a certificate of orthogonality in $\Cc_1 \times \ldots \times \Cc_\ell$, we introduce auxiliary families. 
For every $i \in [\ell]$ and every set $\vecx \subseteq \vecu_i$, define the following two families of sets of $\Cc_i$:
\begin{align}\label{eq:data-structures}
	\Ll^{(i)}(\vecx) \coloneqq \{ \vecc \in \Cc_i \mid \vecx \subseteq \vecc\} \text{ and }
	\Rr^{(i)}(\vecx) \coloneqq \{ \vecc \in \Cc_i \mid \vecx \cap \vecc = \emptyset\}.
\end{align}

\begin{claim}\label{clm:data-structures}
	The families $\Ll^{(i)}(\vecx)$ and $\Rr^{(i)}(\vecx)$ for every $i \in [\ell]$ and
    every $\vecx \subseteq \vecu_i$ can be computed in $\Oh(\ell \cdot 2^{2d/\ell})$ time.
\end{claim}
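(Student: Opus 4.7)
The plan is to construct each family explicitly by iterating over $\Cc_i$ rather than over $\vecx$. For each $i \in [\ell]$, I will initialize empty lists indexed by subsets $\vecx \subseteq \vecu_i$ (using a hash table, or directly an array of size $2^{d/\ell}$ indexed by the binary encoding of $\vecx$; since there are at most $2^{d/\ell}$ such subsets, the initialization cost fits in the budget). Then for each $\vecc \in \Cc_i$ I will (a) enumerate every subset $\vecx$ of $\vecc$ and append $\vecc$ to $\Ll^{(i)}(\vecx)$, and (b) enumerate every subset $\vecx$ of $\vecu_i \setminus \vecc$ and append $\vecc$ to $\Rr^{(i)}(\vecx)$. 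Each append costs $\Oh(1)$, and this is easily seen to produce exactly the families defined in~\eqref{eq:data-structures}.

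The key step is the running time analysis. For a fixed $i$, the work to build all of the $\Ll^{(i)}$-lists is proportional to the number of pairs $(\vecc,\vecx)$ with $\vecx \subseteq \vecc$, which equals
\[
	\sum_{\vecc \in \Cc_i} 2^{|\vecc|} \;=\; |\Cc_i|\cdot 2^{\lambda d/\ell} \;=\; 2^{(\kappa+\lambda)\, d/\ell}\cdot d^{c}.
\]
Similarly, the work for the $\Rr^{(i)}$-lists is $|\Cc_i| \cdot 2^{(1-\lambda) d/\ell} = 2^{(\kappa + 1 - \lambda)\, d/\ell}\cdot d^{c}$.

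The main (but mild) obstacle is verifying that both exponents are at most $2$. This follows from the fact that $\lambda \in [0,1]$ together with the bound $\kappa \le h(\lambda) \le 1$ which holds since the lower bound on $\kappa$ from \cref{lem:candidates} subtracts a non-negative quantity from $h(\lambda)$. In fact one has $h(\lambda)+\lambda < 2$ and $h(\lambda)+(1-\lambda) < 2$ strictly for every $\lambda \in [0,1]$, so the polynomial factor $d^{c}$ is absorbed into $2^{2d/\ell}$ for sufficiently large $d$ (using $d = \Theta(\log n)$). Summing the per-$i$ cost over $i \in [\ell]$ therefore yields the claimed $\Oh(\ell \cdot 2^{2d/\ell})$ total time.
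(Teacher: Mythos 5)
Your proof is correct, but it takes a genuinely different route from the paper. The paper's proof is a one-line brute force: for each $i$, iterate over all $2^{d/\ell}$ subsets $\vecx \subseteq \vecu_i$ and all sets $\vecc \in \Cc_i$ (of which there are at most $2^{d/\ell}$, since $\Cc_i \subseteq 2^{\vecu_i}$), and test the containment/disjointness condition for each pair, giving $\Oh(\ell \cdot 2^{2d/\ell})$ with no reference to $\kappa$ or $\lambda$ at all. You instead do an output-sensitive construction, enumerating for each $\vecc$ its subsets and the subsets of its complement; this yields the sharper bound $\Ot(|\Cc_i|\cdot 2^{\max(\lambda,1-\lambda)d/\ell})$ per part, but forces you to argue that $\kappa + \lambda$ and $\kappa + 1 - \lambda$ are below $2$. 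That extra step is fine, though your justification that $\kappa \le h(\lambda)$ ``since the lower bound on $\kappa$ from \cref{lem:candidates} subtracts a non-negative quantity from $h(\lambda)$'' is slightly indirect — a lower bound being at most $h(\lambda)$ does not by itself cap $\kappa$; the cleaner observation is that $\Cc_i \subseteq \binom{\vecu_i}{\lambda|\vecu_i|}$ forces $|\Cc_i| \le 2^{h(\lambda)d/\ell} \le 2^{d/\ell}$ regardless of how $\kappa$ is chosen (and the paper does set $\kappa$ to the minimal admissible value later). In short: your argument buys a tighter running time that the claim does not need, at the price of a small amount of parameter bookkeeping; the paper's buys simplicity and parameter-independence.
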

\begin{proof}
    Since, for every $i \in [\ell]$, the number of subsets $\vecx \subseteq \vecu_i$ is  $2^{d /
    \ell}$, and the number of sets $\vecc$ in $\Cc_i \subset 2^{\vecu_i}$ is at most
    $2^{d / \ell}$, the families of sets can be constructed by iterating over
    all $i \in [\ell]$, all $\vecx \subset \vecu_i$ and all $\vecc \in \Cc_i$ in time at most $\Oh(\ell \cdot 2^{d/\ell} \cdot 2^{d / \ell})$.
\end{proof}

\begin{lemma}\label{lem:ca}
    For a fixed $\veca \in \Aa$ with $|\veca| = \alpha d$, with probability at least $d^{-\Oh(\ell)}$ it holds that:
    \begin{displaymath}
		|\Ll^{(1)} (\veca \cap \vecu_1)| \cdot |\Ll^{(2)} (\veca \cap \vecu_2)| \cdots
        |\Ll^{(\ell)}(\veca \cap \vecu_\ell)| \le 2^{c_A \cdot d} \cdot d^{\Oh(\ell)},
    \end{displaymath}
    where $c_A \coloneq (1-\alpha) \cdot
    h\left(\frac{\lambda-\alpha}{1-\alpha}\right) - h(\lambda) + \kappa$.
\end{lemma}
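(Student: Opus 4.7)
The plan is to estimate the expected value of the product and then invoke Markov's inequality. The only randomness involved is (i) the partition $\vecu_1, \ldots, \vecu_\ell$, and (ii) the independent uniform sampling of each $\Cc_i$ from $\binom{\vecu_i}{\lambda |\vecu_i|}$. Once we condition on a well-behaved partition, the product factorizes cleanly under expectation.

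First I would invoke \cref{obs:support} to condition on the event that $|\veca \cap \vecu_i| = \alpha |\vecu_i|$ for every $i \in [\ell]$. This holds with probability at least $d^{-\Oh(\ell)}$ and fixes each intersection size to $\alpha d/\ell$. From this point on I work conditionally on an arbitrary partition satisfying this equality.

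Next, for each $i \in [\ell]$ and any fixed $\vecx \subseteq \vecu_i$ of size $\alpha |\vecu_i|$, I would compute the probability that a uniformly chosen $\vecc \in \binom{\vecu_i}{\lambda |\vecu_i|}$ satisfies $\vecx \subseteq \vecc$; this is exactly
\[
 \binom{(1-\alpha)|\vecu_i|}{(\lambda-\alpha)|\vecu_i|} \Big/ \binom{|\vecu_i|}{\lambda |\vecu_i|}.
\]
Applying the entropy estimates in \eqref{eq:binom_apx} (the upper bound to the numerator, the lower bound to the denominator) gives an upper bound of $2^{\left((1-\alpha)h\!\left(\frac{\lambda-\alpha}{1-\alpha}\right)-h(\lambda)\right) \cdot d/\ell} \cdot \poly(d)$. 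Since $|\Ll^{(i)}(\vecx)|$ counts the $\vecc \in \Cc_i$ containing $\vecx$ and $|\Cc_i| = 2^{\kappa \cdot d/\ell} \cdot d^c$, linearity of expectation yields $\Ex{|\Ll^{(i)}(\veca \cap \vecu_i)|} \le 2^{c_A \cdot d/\ell} \cdot \poly(d)$.

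Because the families $\Cc_1, \ldots, \Cc_\ell$ are sampled independently, the expected product factorizes across $i$, so
\[
 \Ex{\prod_{i=1}^{\ell}\,|\Ll^{(i)}(\veca \cap \vecu_i)|} \;=\; \prod_{i=1}^{\ell} \Ex{|\Ll^{(i)}(\veca \cap \vecu_i)|} \;\le\; 2^{c_A \cdot d} \cdot d^{\Oh(\ell)}.
\]
Markov's inequality then gives that the product is at most $2$ times this expectation with probability at least $1/2$; combined with the partition event the overall probability is $d^{-\Oh(\ell)} \cdot 1/2 = d^{-\Oh(\ell)}$, as required.

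The main (mild) obstacle is bookkeeping the polynomial-in-$d$ slack: each of the $\ell$ binomial ratios contributes an independent $\poly(d)$ factor, and one must verify that these aggregate into a single $d^{\Oh(\ell)}$ term, which is fine because $\ell$ is a constant. The substantive content of the proof is entirely carried by the entropy estimates on binomials and by the independence of the $\Cc_i$; once those are in place, the conclusion follows from one application of Markov's inequality.
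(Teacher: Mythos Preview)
Your proof is correct and follows essentially the same approach as the paper: condition on a balanced partition via \cref{obs:support}, bound the per-coordinate expectation of $|\Ll^{(i)}(\veca\cap\vecu_i)|$ using the entropy estimates in \eqref{eq:binom_apx}, and then apply Markov's inequality. The only cosmetic difference is that you apply Markov once to the product (using the independence of the $\Cc_i$ to factor the expectation), whereas the paper applies Markov to each factor separately and multiplies the $\ell$ individual success probabilities; both routes give the same $d^{-\Oh(\ell)}$ bound.
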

\begin{proof}
    We condition on the event that $|\veca \cap \vecu_i| = \alpha |\vecu_i|$ for all $i \in [\ell]$, which happens with probability at least $d^{-\Oh(\ell)}$ by \cref{obs:support}.
    We first prove that for any $i \in [\ell]$ and any $\vecx \subseteq \vecu_i$
    with $|\vecx| = \alpha \cdot |\vecu_i|$ it holds that
	\begin{align}\label{eq:lemca}
		\Ex{|\Ll^{(i)}(\vecx)|} \le 2^{c_A \cdot |\vecu_i|} \cdot d^{\Oh(1)}.
    \end{align}

	By definition, $|\Ll^{(i)}(\vecx)|$ is the number of sets $\vecc \in
    \Cc_i$ such that $\vecx \subseteq \vecc$. Since sets $\vecc \in \Cc_i$ are
    subsets of $\vecu_i$ drawn uniformly at random such that $|\vecc| = \lambda
    |\vecu_i|$, by \cref{eq:binom_apx}, the probability that $\vecx \subseteq \vecc$ is
    \begin{displaymath}
        \Prob{\vecx \subseteq \vecc} = \binom{(1-\alpha)|\vecu_i|}{(\lambda -
		\alpha)|\vecu_i|}\cdot\binom{|\vecu_i|}{\lambda |\vecu_i|}^{-1}
        \le 2^{\left((1-\alpha)h\left(\frac{\lambda-\alpha}{1-\alpha}\right) -
        h(\lambda)\right) \cdot |\vecu_i|}
		.
    \end{displaymath}
    By the linearity of expectation, this means that:
    \begin{displaymath}
		\Ex{|\Ll^{(i)}(\vecx)|} = |\Cc_i| \cdot \Prob{\vecx \subseteq \vecc} \le
        2^{\left((1-\alpha)h\left(\frac{\lambda-\alpha}{1-\alpha}\right) -
        h(\lambda)\right) \cdot |\vecu_i|} \cdot 2^{\kappa \cdot |\vecu_i|} \cdot d^{\Oh(1)},
    \end{displaymath}
	which establishes~\eqref{eq:lemca}.
	
	Now notice that for any $i\in [\ell]$, by Markov inequality we have that
	\begin{displaymath}
		\Prob{|\Ll^{(i)}(\vecx)| \ge 2 \Ex{|\Ll^{(i)}(\vecx)|}} \le 1/2,
	\end{displaymath}
	therefore the probability that we have $|\Ll^{(i)}(\veca\cap \vecu_i)| < 2
    \Ex{|\Ll^{(i)}(\veca\cap \vecu_i)|}$ for all $i \in [\ell]$ is at least $1/2^\ell \ge d^{-\Oh(\ell)}$. This directly implies that with probability at least $d^{-\Oh(\ell)}$ we have
	    \begin{displaymath}
		|\Ll^{(1)} (\veca \cap \vecu_1)| \cdot |\Ll^{(2)} (\veca \cap \vecu_2)| \cdots
        |\Ll^{(\ell)}(\veca \cap \vecu_\ell)| \le 2^{c_A \cdot d} \cdot
        d^{\Oh(\ell)}.\qedhere
    \end{displaymath}
\end{proof}
 
Similarly, one can prove:
\begin{lemma}\label{lem:cb}
    For a fixed $\vecb \in \Bb$ with $|\vecb| = \beta d$, with probability at least $d^{-\Oh(\ell)}$ it holds that:
    \begin{displaymath}
		|\Rr^{(1)}(\vecb \cap \vecu_1)| \cdot |\Rr^{(2)}(\vecb \cap \vecu_2)| \cdots
        |\Rr^{(\ell)}(\vecb \cap \vecu_\ell)| \le 2^{c_B \cdot d} \cdot d^{\Oh(\ell)},
    \end{displaymath}
    where $c_B \coloneq (1-\beta) \cdot
    h\left(\frac{1-\lambda-\beta}{1-\beta}\right) - h(\lambda) + \kappa$.
\end{lemma}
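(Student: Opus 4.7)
The plan is to mirror the proof of \cref{lem:ca}, replacing the ``contains'' event by the ``disjoint from'' event. As in \cref{lem:ca}, I would first condition on the high-probability event from \cref{obs:support} that $|\vecb \cap \vecu_i| = \beta |\vecu_i|$ for every $i \in [\ell]$, which holds with probability at least $d^{-\Oh(\ell)}$. Under this conditioning, it suffices to work with an arbitrary fixed $\vecx \subseteq \vecu_i$ with $|\vecx| = \beta |\vecu_i|$ and bound $\Ex{|\Rr^{(i)}(\vecx)|}$.

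For a single random $\vecc \in \binom{\vecu_i}{\lambda |\vecu_i|}$ the event $\vecx \cap \vecc = \emptyset$ forces $\vecc$ to lie entirely inside $\vecu_i \setminus \vecx$, a set of size $(1-\beta)|\vecu_i|$. Hence
\begin{displaymath}
    \Prob{\vecx \cap \vecc = \emptyset} = \binom{(1-\beta)|\vecu_i|}{\lambda |\vecu_i|} \cdot \binom{|\vecu_i|}{\lambda |\vecu_i|}^{-1}.
\end{displaymath}
Applying \cref{eq:binom_apx} and using the identity $h(p) = h(1-p)$ to rewrite $h\left(\frac{\lambda}{1-\beta}\right)$ as $h\left(\frac{1-\lambda-\beta}{1-\beta}\right)$, this probability is at most $2^{\left((1-\beta) h\left(\frac{1-\lambda-\beta}{1-\beta}\right) - h(\lambda)\right) |\vecu_i|} \cdot d^{\Oh(1)}$. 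By linearity of expectation and $|\Cc_i| = 2^{\kappa |\vecu_i|} \cdot d^{\Oh(1)}$, this yields
\begin{displaymath}
    \Ex{|\Rr^{(i)}(\vecx)|} \le 2^{c_B \cdot |\vecu_i|} \cdot d^{\Oh(1)},
\end{displaymath}
which is the direct analog of inequality~\eqref{eq:lemca}.

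The final step is the same concentration argument as in \cref{lem:ca}: by Markov's inequality, each $|\Rr^{(i)}(\vecb \cap \vecu_i)|$ is at most twice its expectation with probability at least $1/2$, and these $\ell$ events are independent across the $i$'s since the families $\Cc_i$ are drawn independently. Thus with probability at least $2^{-\ell} \ge d^{-\Oh(\ell)}$ (combined with the earlier conditioning event through a union bound, which only multiplies the probability by a constant in the exponent of $d^{-\Oh(\ell)}$), the product $\prod_{i=1}^{\ell} |\Rr^{(i)}(\vecb \cap \vecu_i)|$ is bounded by $2^{c_B \cdot d} \cdot d^{\Oh(\ell)}$, as claimed.

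I do not anticipate a real obstacle: the only moving part relative to \cref{lem:ca} is replacing $\binom{(1-\alpha)|\vecu_i|}{(\lambda-\alpha)|\vecu_i|}$ by $\binom{(1-\beta)|\vecu_i|}{\lambda |\vecu_i|}$, and the only minor care needed is to invoke the $h(p)=h(1-p)$ symmetry so that the entropy expression matches the stated form of $c_B$.
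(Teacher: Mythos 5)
Your proposal is correct and matches the paper's proof, which simply notes that the argument of Lemma~\ref{lem:ca} carries over once one observes $\Prob{\vecx \cap \vecc = \emptyset} = \binom{(1-\beta)|\vecu_i|}{\lambda|\vecu_i|}\binom{|\vecu_i|}{\lambda|\vecu_i|}^{-1}$ — exactly the substitution you make, including the $h(p)=h(1-p)$ symmetry needed to match the stated form of $c_B$. The remaining steps (conditioning on \cref{obs:support}, linearity of expectation, Markov plus independence across the $\ell$ blocks) are identical to the paper's argument.
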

\begin{proof}
    The proof is similar as for \cref{lem:ca} by observing that, for every $i \in [\ell]$
    $$
    \Prob{\vecx \cap \vecc = \emptyset} = \binom{(1 -
    \beta)|\vecu_i|}{\lambda|\vecu_i|}\cdot\binom{|\vecu_i|}{\lambda
    |\vecu_i|}^{-1},
    $$
    for sets $\vecx, \vecc \subset \vecu_i$ with $|\vecx| = \beta |\vecu_i|$ and
    $|\vecc| = \lambda |\vecu_i|$.
\end{proof}


\subsection{Algorithm}

Now, with $\Ll^{(i)}$ and $\Rr^{(i)}$ in hand, we can present the algorithm given in pseudocode in~\cref{alg:2ov}. First, we construct a set
$\textsf{Candidates}$ of tuples in $\Cc_1 \times \cdots \times \Cc_\ell$ that
satisfy property~\ref{em:subs}. We do this by iterating over every set $\veca
\in \Aa$ and for such $\veca$ we iterate over every $(\vecc_1,\ldots,\vecc_\ell) \in
\Ll^{(1)}(\veca
\cap \vecu_1) \times \cdots \times \Ll^{(\ell)}(\veca \cap \vecu_\ell)$ and add it to the set
$\textsf{Candidates}$. 

Clearly, after this step, every set that satisfies~\ref{em:subs} is in \textsf{Candidates}. Now, our
goal is to decide if there exists at least one tuple in \textsf{Candidates} that additionally satisfies~\ref{em:disj}.
To do so, we iterate over every $\vecb \in \Bb$ and then over every 
$(\vecc_1,\ldots,\vecc_\ell) \in \Rr^{(1)}(\vecb \cap \vecu_1) \times \cdots \times
\Rr^{(\ell)}(\vecb \cap \vecu_\ell)$. If $(\vecc_1,\ldots,\vecc_\ell) \in \textsf{Candidates}$, then we know that there exists a certificate of orthogonality and immediately report that there exists an orthogonal pair.
If none of the loops reported yes, at the end of the algorithm we report that there is no pair of orthogonal vectors.

\begin{algorithm}[ht!]
    \DontPrintSemicolon
    \textbf{function} $\textsf{OV}(\Aa,\Bb)$:\\
    Guess $\alpha d,\beta d \in [d]$ \tcp*{Only $\Oh(d^{2})$ possibilities.}
    Randomly partition $[d] = \vecu_1 \sqcup \ldots \sqcup \vecu_\ell$ into equal size sets \tcp*{see~Sec~\ref{sec:preprocessing}}
    Draw families $\Cc_1,\ldots,\Cc_\ell$ as defined in \eqref{eq:candidates} \\
	Construct $\Ll^{(1)},\ldots,\Ll^{(\ell)},\Rr^{(1)},\ldots,\Rr^{(\ell)}$ as defined in \eqref{eq:data-structures}\\
    Let $\textsf{Candidates} = \{\}$\\
	\ForEach{$\veca \in \Aa$}{
		\ForEach(\tcp*[f]{break if more than $2^{c_A \cdot
        d}$}){$(\vecc_1,\ldots,\vecc_\ell) \in \Ll^{(1)}(\veca \cap \vecu_1) \times \cdots \times
		\Ll^{(\ell)}(\veca \cap \vecu_\ell)$\label{ln:itera}}{
            Add $(\vecc_1,\ldots,\vecc_\ell)$ to \textsf{Candidates}.
        }
    }
    \ForEach{$\vecb \in \Bb$}{
		\ForEach(\tcp*[f]{break if more than $2^{c_B \cdot
        d}$}){$(\vecc_1,\ldots,\vecc_\ell) \in \Rr^{(1)}(\vecb \cap \vecu_1) \times \cdots \times
		\Rr^{(\ell)}(\vecb \cap \vecu_\ell)$\label{ln:iterb}}{
            \If{$(\vecc_1,\ldots,\vecc_\ell) \in \normalfont{\textsf{Candidates}}$}{\Return Exist Orthogonal Pair}
        }
    }
    \Return No Orthogonal Pair
	\caption{Pseudocode of the algorithm of~\cref{lem:2ov-with-constants}.}
    \label{alg:2ov}
\end{algorithm}

Finally, to simplify the analysis of \cref{alg:2ov}, we stop iterating over every candidate set in Line~\ref{ln:itera} and Line \ref{ln:iterb} if the number of iterations exceeds respectively $2^{c_A \cdot d} \cdot d^{\Oh(\ell)}$ or $2^{c_B \cdot d} \cdot d^{\Oh(\ell)}$ steps, where $c_A$ and $c_B$ are the constants defined in \cref{lem:ca,lem:cb}.
When we break these loops, we do not add any sets to $\textsf{Candidates}$ and immediately continue the loops.


This concludes the description of~\cref{alg:2ov}. The following lemma certifies
its correctness and bounds its running time in terms of the parameters $\alpha,
\beta, \lambda(\alpha, \beta)$ and $\kappa(\alpha, \beta)$. We then provide an analytical analysis of the running time by optimizing $\alpha$ and $\beta$.

\begin{lemma}\label{lem:2ov-with-constants}
    Given $\Aa,\Bb \subseteq 2^{[d]}$, there exists an 
	\begin{displaymath}
		\left(2^{c_A \cdot d} |\Aa| + 2^{c_B \cdot d} |\Bb| + \ell \cdot 2^{2d/\ell}\right) \cdot d^{\Oh(\ell)}
	\end{displaymath}
	time randomized, one-sided error algorithm with $1 - 2^{-d^{\Omega(1)}}$ probability of success that decides if there
    exist $\veca \in \Aa$ and $\vecb \in \Bb$ such that $\veca \cap \vecb =
    \emptyset$, where $c_A$ and $c_B$ are as in \cref{lem:ca,lem:cb}.
\end{lemma}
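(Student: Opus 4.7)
My plan is to run \cref{alg:2ov} independently $T = d^{\Oh(\ell)}$ times with fresh randomness on each run and return ``Exist Orthogonal Pair'' if any trial does so. One-sided correctness is immediate: whenever \cref{alg:2ov} reports a pair, it has witnessed some tuple $(\vecc_1,\dots,\vecc_\ell)$ that both lies in $\textsf{Candidates}$ (so property~\ref{em:subs} holds for some $\veca \in \Aa$) and lies in $\Rr^{(1)}(\vecb \cap \vecu_1) \times \cdots \times \Rr^{(\ell)}(\vecb \cap \vecu_\ell)$ for some $\vecb \in \Bb$ (so property~\ref{em:disj} holds for that $\vecb$); together these force $\veca \cap \vecb = \emptyset$. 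It therefore only remains to lower bound the probability that a single run of \cref{alg:2ov} succeeds when an orthogonal pair exists.

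For the single-run analysis, I would fix an arbitrary orthogonal pair $\veca^\star \in \Aa$, $\vecb^\star \in \Bb$ and identify four events that together force \cref{alg:2ov} to succeed: $E_1$, the partition event of \cref{obs:support}; $E_2$, the existence of a certificate of orthogonality for $(\veca^\star,\vecb^\star)$ in $\Cc_1 \times \cdots \times \Cc_\ell$; and $E_3$, $E_4$, the product bounds $\prod_i |\Ll^{(i)}(\veca^\star \cap \vecu_i)| \le 2^{c_A d}\cdot d^{\Oh(\ell)}$ and $\prod_i |\Rr^{(i)}(\vecb^\star \cap \vecu_i)| \le 2^{c_B d}\cdot d^{\Oh(\ell)}$, which prevent the two main loops from truncating at $\veca = \veca^\star$ and $\vecb = \vecb^\star$. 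Under $E_1 \cap E_2 \cap E_3 \cap E_4$, the first loop inserts the certificate into $\textsf{Candidates}$ and the second loop rediscovers it. By \cref{obs:support} we have $\Prob{E_1} \ge d^{-\Oh(\ell)}$, and the internal argument of \cref{lem:candidates} gives $\Prob{E_2 \mid E_1} \ge c_1$ for a constant $c_1 = (1-e^{-1})^\ell > 0$ depending only on $\ell$. Applying Markov's inequality to the expectation bounds of \cref{lem:ca,lem:cb} on the products $\prod_i |\Ll^{(i)}|$ and $\prod_i |\Rr^{(i)}|$, and absorbing a sufficiently large constant (say $3/c_1$) into the $d^{\Oh(\ell)}$ factor of the truncation thresholds, yields $\Prob{\overline{E_3} \mid E_1},\, \Prob{\overline{E_4} \mid E_1} \le c_1/3$. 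A union bound then gives $\Prob{E_2 \cap E_3 \cap E_4 \mid E_1} \ge c_1/3$, so a single trial succeeds with probability at least $\Prob{E_1}\cdot c_1/3 \ge d^{-\Oh(\ell)}$.

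Given single-run success probability $p \ge d^{-\Oh(\ell)}$, choosing $T = \Theta(p^{-1} \cdot d^{\Omega(1)}) = d^{\Oh(\ell)}$ drives the failure probability to $(1-p)^T \le e^{-pT} = 2^{-d^{\Omega(1)}}$, and the $T$-factor is absorbed into $d^{\Oh(\ell)}$ in the final bound. For the per-run time, \cref{clm:data-structures} builds the auxiliary families in $\Oh(\ell \cdot 2^{2d/\ell})$ time; truncation caps the first loop at $2^{c_A d}\cdot d^{\Oh(\ell)}$ work per $\veca \in \Aa$ and the second at $2^{c_B d}\cdot d^{\Oh(\ell)}$ per $\vecb \in \Bb$; and storing $\textsf{Candidates}$ as a hash table makes the membership tests cost $\Oh(\ell)$ each, producing the claimed running time. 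The main obstacle is the joint probability estimate, because the three unconditional $d^{-\Oh(\ell)}$ bounds from \cref{lem:candidates,lem:ca,lem:cb} cannot be combined by a naive union bound; the crucial step is to first condition on the partition event $E_1$, where $E_2$ has constant probability $c_1$, and then calibrate the Markov threshold for the size bounds against $c_1$ so that the conditional union bound gives a constant positive probability.
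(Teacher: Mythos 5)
Your proposal is correct and follows essentially the same route as the paper: run \cref{alg:2ov} $d^{\Oh(\ell)}$ times, bound the single-run success probability via \cref{obs:support}, \cref{lem:candidates}, and \cref{lem:ca,lem:cb}, and use the loop truncation to control the running time. In fact your treatment of the joint probability --- conditioning on the partition event $E_1$, extracting the constant conditional bound $(1-e^{-1})^\ell$ from the internal argument of \cref{lem:candidates}, and calibrating the Markov threshold so that a conditional union bound over $E_2, E_3, E_4$ works --- is more explicit than the paper's, which simply cites the three lemmas without spelling out how their $d^{-\Oh(\ell)}$ bounds combine.
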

\begin{proof}
	We repeat~\cref{alg:2ov} $d^{\Oh(\ell)}$ times and return true iff any execution returns true.
	Consider~\cref{alg:2ov}. The preprocessing in~\cref{sec:preprocessing} takes
    $\Oh(d)$ time.  By~\cref{clm:data-structures}, the construction of data
	structures $\Ll^{(1)},\ldots,\Ll^{(\ell)},\Rr^{(1)},\ldots,\Rr^{(\ell)}$ takes $\Oh(\ell \cdot 2^{2d/\ell})$
	time.  Finally, the for loops in~\cref{alg:2ov}
	take in total $(2^{c_A \cdot d}|\Aa| + 2^{c_B \cdot d} |\Bb|) \cdot d^{\Oh(\ell)}$ time.
	This concludes the running time analysis. It remains to prove the correctness.

	Note that if there do not exist any disjoint pairs in $\Aa$ and $\Bb$,
	then~\cref{alg:2ov} never returns true. Hence, to bound the probability of
	false-negative, assume that $\veca \in \Aa$ and $\vecb \in \Bb$ are orthogonal.
    By~\cref{lem:candidates}, with probability at least $d^{-\Oh(\ell)}$, there exists a certificate of orthogonality in $\Cc_1 \times \dots \times \Cc_\ell$ for $\veca$ and $\vecb$. 
	By~\cref{lem:ca,lem:cb}, with probability at least $d^{-\Oh(\ell)}$, the loop in Line~\ref{ln:itera} corresponding to $\veca$ and the loop in Line~\ref{ln:iterb} corresponding to $\vecb$ are fully executed.
    %
    Observe that by~\cref{lem:ca}, with probability at least $d^{-\Oh(\ell)}$, the loop in Line~\ref{ln:itera} corresponding to $\veca$ is fully executed. 
	If that happens, the set $\textsf{Candidates}$ contains every tuple
	$(\vecc_1,\ldots,\vecc_\ell) \in \Ll^{(1)}(\veca \cap \vecu_1) \times \cdots \times
    \Ll^{(\ell)}(\veca \cap
	\vecu_\ell)$. In particular, it contains the certificate of orthogonality for $\veca$ and $\vecb$.
    Moreover, by~\cref{lem:cb}, with probability at least $d^{-\Oh(\ell)}$, the loop in Line~\ref{ln:iterb} corresponding to $\vecb$ is also fully executed. Hence this certificate will be detected and the algorithm returns true. 

	To conclude, if $\veca \in \Aa$ and $\vecb \in \Bb$ are orthogonal, \cref{alg:2ov}
	returns true with probability $d^{-\Oh(\ell)}$. To guarantee probability of success $1-2^{-d^{\Omega(1)}}$, it suffices to repeat \cref{alg:2ov} $d^{\Oh(\ell)}$
	many times.
\end{proof}

\begin{proof}[Proof of~\cref{thm:2ov}]
    Assume that there exist $\veca \in \Aa$ and $\vecb \in \Bb$ such that $\veca
    \cap \vecb = \empty$ and let $|\veca| = \alpha d$ and $|\vecb| = \beta d$ for some $\alpha, \beta \in [0, 1]$.
	By~\cref{lem:2ov-with-constants}, one can find the pair $(\veca, \vecb)$ in time
	\begin{displaymath}
		\Ot\left(2^{c_A \cdot d} |\Aa| + 2^{c_B \cdot d} |\Bb| + \ell \cdot
        2^{2d/\ell}\right),
	\end{displaymath}
    where $\lambda$ and $\kappa$ are parameters depending on $\alpha$ and $\beta$ to
    be optimized, and $c_A$ and $c_B$ are constants depending on $\alpha$,
    $\beta$, $\lambda$ and $\kappa$. 
	Note that we have selected $\ell = 100$, so the last term is upper bounded
    by $\Ot(1.25^d \cdot n)$. Therefore, it suffices to prove that for any
    $\alpha,\beta$, by properly selecting $\lambda$ and $\kappa$, we can bound $2^{c_A d} + 2^{c_B d}$ by $\Ot(1.25^d)$.
	
    First, observe that the constants $c_A$ and $c_B$ defined in
    \cref{lem:ca,lem:cb} grow with the parameter $\kappa$. On the other hand, by
    \cref{lem:candidates}, we need $\kappa \geq h(\lambda) - (1-\alpha - \beta) \cdot
    h(\frac{\lambda - \alpha}{1 - \alpha - \beta})$ for the correctness analysis
    in \cref{lem:2ov-with-constants} to hold. Hence to minimise $2^{c_A d} +
    2^{c_B d}$, we choose $\kappa$ to be minimal, i.e.~ we set $\kappa \coloneq h(\lambda) - (1-\alpha - \beta) \cdot h(\frac{\lambda - \alpha}{1 - \alpha - \beta})$. Then we have 
    \begin{equation}\label{eq:running_time}
    \begin{aligned}
        c_A &= (1-\alpha) \cdot h\left(\frac{\lambda-\alpha}{1-\alpha}\right) -
        (1 - \alpha - \beta) \cdot h\left(\frac{\lambda - \alpha}{1 - \alpha -
        \beta}\right), \\ 
        c_B &= (1-\beta) \cdot h\left(\frac{1-\lambda-\beta}{1-\beta}\right) -
        (1 - \alpha - \beta) \cdot h\left(\frac{\lambda - \alpha}{1 - \alpha -
        \beta}\right),
    \end{aligned}
    \end{equation}
    and using \Cref{lem:multvsentropy}, we get the following bound
    \begin{align*}
        2^{c_A d} + 2^{c_B d} 
        &\leq \Ot\left(\left( \binom{(1-\alpha)d}{(\lambda - \alpha)d} +
        \binom{(1-\beta)d}{(1-\lambda-\beta )d}\right)\cdot
    \binom{(1-\alpha-\beta)d}{(\lambda - \alpha)d}^{-1} \right).
    \intertext{So by setting $\lambda \coloneq (1 + \alpha - \beta) / 2$ and noticing that $\binom{(1-\alpha-\beta)d}{(1 - \alpha - \beta)d/2}^{-1} \le \Ot(2^{-(1-\alpha-\beta)d})$, we can further bound }
        2^{c_A d} + 2^{c_B d} 
        &\leq\Ot\left(\left( \binom{(1-\alpha)d}{(1- \alpha - \beta)d/2} + \binom{(1-\beta)d}{(1-\alpha-\beta )d/2}\right)\cdot \binom{(1-\alpha-\beta)d}{(1- \alpha - \beta)d/2}^{-1} \right) \\
        &\leq \Ot\left(\left( \binom{(1-\alpha)d}{(1- \alpha - \beta)d/2} +
        \binom{(1-\beta)d}{(1-\alpha-\beta )d/2}\right)\cdot
    2^{-(1-\alpha-\beta)d}  \right).
    \intertext{
    Notice that both binomials select $(1-\alpha-\beta)d/2$ elements and recall that we assume $\alpha \leq \beta$. Hence the left term dominates the right term and by using \cref{lem:multvsentropy} we bound the expression}
        2^{c_A d} + 2^{c_B d} 
        &\leq \Ot\left( \binom{(1-\alpha)d}{(1- \alpha - \beta)d/2} \cdot 2^{-(1-\alpha-\beta)d}  \right) \\
        &\leq \Ot\left( 2^{h\left(\frac{(1 - \alpha - \beta)/2}{1-\alpha}\right) \cdot  (1-\alpha)d}\cdot 2^{-(1-\alpha-\beta)d}  \right) \\
        &= \Ot\left( 2^{h\left(\frac{1}{2} - \frac{\beta}{2(1-\alpha)}\right) \cdot  (1-\alpha)d -(1-\alpha-\beta)d}  \right) 
        = \Ot\left( 2^{\left(h\left(\frac{1}{2} -
        \frac{\beta}{2(1-\alpha)}\right)  -1\right) \cdot (1-\alpha)d + \beta d} \right).
    \end{align*}
    The term involving $\alpha$, that is $\left(h\left(\frac{1}{2} - \frac{\beta}{2(1-\alpha)}\right)  -1\right) \cdot (1-\alpha)d$, is maximized when $\alpha = 0$. This setting allows $\beta$ to take any value in $[0, 1]$. Hence the above term is maximized when $\alpha = 0$ and $\beta \in [0, 1]$ is maximizing $h(1/2 + \beta/2) - 1 + \beta$. 
    The derivative is $(\log_2(4-4\beta) - \log_2(\beta+1))/2$, therefore we conclude that this expression has a single local
    maximum at $\beta = 3/5$. In that case, the running time is bounded by:
        \begin{displaymath}
            \Ot\left(2^{(h(1/5) - 1 + 3/5)\cdot d} \cdot n \right) =
            \Ot\left(
                1.25^{d} \cdot n
            \right).\qedhere
        \end{displaymath}
\end{proof}

\begin{remark}
    The computer evaluation suggests that the maximum is obtained for the choice $\alpha =
    \beta = 1/3$, for which the best $\lambda = 1/2$. See~\cref{fig:opt_running_time}. This choice yields the running time:
    \begin{align*}
        \Ot\left(\left( \frac{4}{3} \right)^{d/2}\cdot n\right) \le \Oh(1.16^d \cdot n)
        .
    \end{align*}
\end{remark}

\section{Algorithm for $k$-Orthogonal Vectors ($k$-OV)}\label{sec:kOV}

We now show that for every fixed $k \ge 2$, there exists an algorithm for $k$-OV with $\Oh(2^{(1-\eps_k) \cdot d}\cdot n)$ running time, where $\eps_k>0$ is a constant depending only on $k$. We remark that $\eps_k$ tends to $0$ as $k$ tends to infinity. This is consistent with the lower bound we present in \cref{thm:lowerBound}.

\kov*

To prove \cref{thm:kov}, we will need the following combinatorial algorithm that follows from \cite[Theorem 1]{bjorklund-ov}. 

\begin{lemma}\label{thm:kov_dense}
    Let $k \in \nat$ be a fixed integer. Given $\Aa_1,\ldots,\Aa_k \subseteq
    2^{[d]}$, we can count the number of tuples $(\veca_1, \dots,
    \veca_k) \in \Aa_1 \times \dots \times \Aa_k$ such that $\veca_1 \cap
    \veca_2 \cap \ldots \cap \veca_k = \emptyset$ in time at most $\Oh(d \cdot (|\dc \Aa_1| + \ldots + |\dc \Aa_k| ))$.
\end{lemma}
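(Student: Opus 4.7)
The plan is to reduce the count to a subset-convolution sum via inclusion-exclusion, and then to compute the relevant functions by a zeta transform restricted to the down-closures. For each $i$, define $F_i(\vecs) \coloneqq |\{\veca \in \Aa_i : \vecs \subseteq \veca\}|$. Applying the elementary identity $\iv{T=\emptyset} = \sum_{\vecs \subseteq T} (-1)^{|\vecs|}$ with $T = \veca_1 \cap \cdots \cap \veca_k$, expanding $\iv{\vecs \subseteq T} = \prod_i \iv{\vecs \subseteq \veca_i}$, and swapping the order of summation yields
\[
\bigl|\{(\veca_1,\ldots,\veca_k) : \veca_1 \cap \cdots \cap \veca_k = \emptyset\}\bigr| \;=\; \sum_{\vecs \subseteq [d]} (-1)^{|\vecs|} \prod_{i=1}^k F_i(\vecs).
\]
Since $F_i(\vecs) = 0$ whenever $\vecs \notin \dc \Aa_i$, the outer sum is supported on $\bigcap_{i=1}^k \dc \Aa_i$, so it suffices to know each $F_i$ on its own down-closure.

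The main step is to compute each $F_i$ restricted to $\dc \Aa_i$ in time $\Oh(d \cdot |\dc \Aa_i|)$. First, I would build $\dc \Aa_i$ by a BFS on the subset lattice: store the sets of $\Aa_i$ in a hash table, and for each $\vecx$ already in the table and each $j \in \vecx$, insert $\vecx \setminus \{j\}$ if not yet present. Every pair $(\vecx, j)$ with $\vecx \in \dc \Aa_i$ and $j \in \vecx$ is examined at most once, giving $\Oh(d \cdot |\dc \Aa_i|)$ time. Next, initialize $F_i(\veca)$ to the multiplicity of $\veca$ in $\Aa_i$ and $F_i(\vecs) = 0$ for proper subsets in $\dc \Aa_i$, and run the standard subset-lattice zeta transform trimmed to $\dc \Aa_i$: for coordinates $j = 1, \ldots, d$, iterate over every $\vecs \in \dc \Aa_i$ with $j \notin \vecs$ and add $F_i(\vecs \cup \{j\})$ to $F_i(\vecs)$ (treating an absent key as $0$). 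A routine induction on $j$ shows that at the end $F_i(\vecs) = |\{\veca \in \Aa_i : \vecs \subseteq \veca\}|$. Finally, iterate over $\dc \Aa_1$, use $\Oh(k)$ hash lookups per element to discard $\vecs$ missing from some $\dc \Aa_j$, and accumulate $(-1)^{|\vecs|} \prod_i F_i(\vecs)$; since $k$ is fixed this adds only $\Oh(|\dc \Aa_1|)$ to the total.

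The main obstacle I anticipate is purely one of bookkeeping: a naive enumeration of $\dc \Aa_i$ via $\bigcup_{\veca \in \Aa_i} 2^{\veca}$ costs $\Oh\bigl(\sum_{\veca \in \Aa_i} 2^{|\veca|}\bigr)$, which can exceed $|\dc \Aa_i|$ by up to an $|\Aa_i|$ factor when the down-closures overlap heavily. The BFS above circumvents this by charging each element of $\dc \Aa_i$ only once. Once every step is accounted for in this manner, summing the per-family costs gives the claimed $\Oh\bigl(d \cdot (|\dc \Aa_1| + \cdots + |\dc \Aa_k|)\bigr)$ running time, and correctness follows directly from the displayed identity.
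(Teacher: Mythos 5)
Your proposal is correct and follows essentially the same route as the paper: the inclusion--exclusion identity $\iv{T=\emptyset}=\sum_{\vecs\subseteq T}(-1)^{|\vecs|}$ reduces the count to $\sum_{\vecs}(-1)^{|\vecs|}\prod_i F_i(\vecs)$, and each $F_i$ is computed on $\dc\Aa_i$ by a coordinate-by-coordinate superset-sum (zeta) transform, which is exactly the paper's $g_j$ recurrence written as an in-place update. Your explicit BFS enumeration of $\dc\Aa_i$ and the observation that the outer sum is supported on $\bigcap_i\dc\Aa_i$ (the paper uses the looser $\bigcup_i\dc\Aa_i$) are minor refinements of details the paper leaves implicit.
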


\begin{proof}
    \newcommand{\Count}{\#\text{OV}}
    This directly follows from \cite[Theorem 1]{bjorklund-ov} extended to $k$ families. For completeness, we detail the proof here.

    Let $\Count$ be the number of solutions for the given instance, i.e. the
    number of tuples $(\veca_1, \dots, \veca_k) \in \Aa_1 \times \dots \times
    \Aa_k$ such that $\veca_1 \cap \dots \cap \veca_k = \emptyset$. We can
    express $\Count$ as follows, where the second equality comes from the fact that any non-empty set has as many subsets of even size as subsets of odd size.
    \begin{align*}
    \Count &= \sum_{\veca_1 \in \Aa_1} \sum_{\veca_2 \in \Aa_2} \dots \sum_{\veca_k
    \in \Aa_k} \iv{\veca_1 \cap \veca_2 \cap \dots \cap \veca_k = \emptyset} \\
    &= \sum_{\veca_1 \in \Aa_1} \sum_{\veca_2 \in \Aa_2} \dots \sum_{\veca_k \in
    \Aa_k} \sum_{\vecx \subset [d]} (-1)^{|\vecx|} \cdot \iv{\vecx \subset \veca_1 \cap \veca_2
    \cap \dots \cap \veca_k} \\ 
    &= \sum_{\vecx \subset [d]} (-1)^{|\vecx|} \sum_{\veca_1 \in \Aa_1} \sum_{\veca_2
    \in \Aa_2} \dots \sum_{\veca_k \in \Aa_k}  \iv{\vecx \subset \veca_1} \cdot
    \iv{\vecx \subset \veca_2} \dots \iv{\vecx \subset \veca_k} \\ 
    &= \sum_{\vecx \in \bigcup_{i = 1}^k \dc \Aa_i} (-1)^{|\vecx|} \cdot f_1
    (\vecx) \cdot f_2(\vecx) \cdots f_k(\vecx)
    \end{align*}
    where $f_i(\vecx) \coloneq \sum_{\veca \in \Aa_i} \iv{\vecx \subset \veca}$ for $i \in [k]$. 
    For each $i \in [k]$, we compute the values of $f_i(\cdot)$ for all $\vecx \in
    \dc \Aa_i$ in time $\Oh(d \cdot |\dc \Aa_i|)$ as follows. For each element
    $j \in \{0, 1, \dots, d\}$ and subset $\vecx \in \dc \Aa_i$ define
    $g_j(\vecx)$ to
    be the number of sets $\veca \in \Aa_i$ such that $\vecx \subset \veca$ and
    $\veca\cap \{1, 2, \dots, j\} = \vecx \cap \{1, 2, \dots, j\}$. In
    particular, $g_0 (\vecx) = f_i(\vecx)$ and $g_d(\vecx) = \iv{\vecx \in A_i}$. By induction on $j$, we can prove that 
    $$
    g_{j-1}(\vecx) = \iv{j \notin \vecx} \cdot g_j(\vecx) + \iv{\vecx \cup \{j\} \in \dc
    \Aa_i} \cdot g_j(\vecx \cup \{j\}).
    $$
    Using this recurrence relation, we can compute $g_j(\vecx)$ for all $\vecx
    \in \dc \Aa_i$ and for all $j \in \{0, 1, \dots, d\}$, and thus in
    particular $f_i(\vecx)$, in time $\Oh(d |\dc \Aa_i|)$. 
    By repeating this for every $i \in [k]$ and using the above formula for
    $\Count$, we can count the number of $k$-OV solutions in time $\Oh(d \cdot (|\dc \Aa_1| + \dots + |\dc \Aa_k|))$. 
\end{proof}




\begin{proof}[Proof of \cref{thm:kov}]
    Fix $k \geq 3$ and let $0 < \epsilon_k \leq 1/2$ be a parameter depending
    only on $k$ that we define later. Consider a $k$-OV instance $\Aa_1, \dots,
    \Aa_k \subset 2^{[d]}$.
    Note that if the down-closures of $\Aa_1, \dots, \Aa_k$ are all small enough,
    i.e.\ of size at most $\Oh(2^{(1-\eps_k)d} \cdot n)$, then
    \cref{thm:kov_dense} provides the desired running time. If this is not the
    case, then we reduce the $k$-OV instance to a $k'$-OV instance
    for smaller $k'< k$. Since \cref{thm:det-ov} proves the statement for $k=2$ with value $\epsilon_2 = 1/2$, we only need to show that $\epsilon_{k}$ stays positive.

    Assume that there exists a solution $(\veca_1, \dots, \veca_k) \in \Aa_1 \times
    \dots \times \Aa_k$ such that $\veca_1 \cap \dots \cap \veca_k = \emptyset$.
    Let $\alpha_1, \dots, \alpha_k \in [0,1]$ be such that $|\veca_i| = \alpha_i
    d$ for every $i \in [k]$. By enumerating all possible sizes, we can guess
    the values of $\alpha_1, \dots, \alpha_k$ in $\Oh(d^k)$ time. 
    This factor will be negligible compared to $2^{(1-\eps_k)d}$ as $k\ge2$ is a
    fixed constant. After guessing $\alpha_1,\ldots,\alpha_k$, we assume that the
    family $\Aa_i$ contains only sets of cardinality exactly $\alpha_i d$ for every $i \in [k]$. 
    Next, we distinguish between two cases.

    If $\alpha_i \leq (1 - \eps_k)$ for every $i \in [k]$, then the
    size of the down-closure of $\Aa_i$ is $|\dc \Aa_i| = \Oh(2^{\alpha_i d} \cdot
    |\Aa_i|)$. Using the algorithm of \cref{thm:kov_dense}, we solve the $k$-OV instance $\Aa_1, \dots,
    \Aa_k$  in time $\Oh( 2^{(1 -
    \eps_k)d} \cdot n \cdot d)$.  
    
    If there exists $i \in [k]$ such that $\alpha_i > (1 - \eps_k)$, then we recurse as follows. By reordering the families, we can assume that $\alpha_1 \leq \dots \leq\alpha_k$.
    Let $\ell \in [k]$ be the smallest index such that $\alpha_\ell > (1 - \eps_k)$, i.e. $\alpha_1 \leq \dots \leq \alpha_{\ell-1} \leq (1 - \eps_k) < \alpha_\ell \leq \dots \leq \alpha_k$. 
    Note that $(\veca_\ell, \dots, \veca_k) \in \Aa_\ell \times \dots \times \Aa_k$. Furthermore, for every $i \in \{\ell, \dots, k\}$, since $\eps_k \leq 1/2$, we can bound $|\Aa_i| \leq \sum_{j = (1-\eps_k)d}^d \binom{d}{j} \leq \binom{d}{(1- \eps_k)d} \cdot d = \binom{d}{\eps_k d} \cdot d$. By \cref{eq:binom_apx} we thus get $|\Aa_i| \leq 2^{h(\eps_k)d} \cdot d$.
    Hence we can enumerate $\Aa_\ell \times \dots \times \Aa_k$ to guess the
    sets $\veca_\ell, \dots, \veca_k$ in time $\Oh\left(2^{h(\eps_k)d (k - \ell
    + 1)}\right)$. To find the remaining sets $\veca_1, \dots, \veca_{\ell -1}$,
    observe that we can restrict the universe to $\vecr \coloneq \veca_\ell \cap
    \dots \cap \veca_k$ as we have $\veca_1 \cap \dots \cap \veca_k = \veca_1
    \cap \dots \cap \veca_{\ell -1} \cap \vecr = (\veca_1 \cap \vecr) \cap \dots \cap
    (\veca_{\ell-1} \cap \vecr)$. So we recursively solve the  $(\ell - 1)$-OV
    instance $\Aa_1^r, \dots, \Aa_{\ell-1}^r$ where $\Aa_i^r \coloneq \{\veca
    \cap \vecr \ \mid \  \veca \in \Aa_i\}$ for all $i \in \{1, \dots, \ell-1\}$. By induction, this takes time $\Ot(2^{(1- \eps_{\ell-1})d} \cdot n)$.
    In total, we solve the $k$-OV instance $\Aa_1, \dots, \Aa_k$ in time
    $\Ot(2^{h(\eps_k) d (k - \ell +1)} \cdot 2^{(1- \eps_{\ell-1})d} \cdot n) = \Ot(2^{(1 - \eps_k)d} \cdot n)$ for $\eps_k \leq \eps_{\ell-1} - h(\eps_k) \cdot (k - \ell + 1)$. 
    Note that since $\eps_k \leq 1/2$, the function $\eps_k \mapsto \eps_k + h(\eps_k) \cdot (k-\ell +1)$ is increasing.
    {Furthermore, $\lim_{\eps_k \rightarrow 0^+} \eps_k + h(\eps_k) \cdot (k-\ell +1) = 0$.} 
    Since $\eps_2 > 0$ by \cref{thm:det-ov}, we can set $\eps_k$ to a positive value satisfying $\eps_k \leq \min \{1/2, \eps_{\ell-1} - h(\eps_k) \cdot (k - \ell + 1)\}$. 
    {Achieving running time $\Oh( 2^{(1 - \eps_k)d} \cdot n)$, instead of $\Ot( 2^{(1 - \eps_k)d} \cdot n)$ follows directly by scaling all $\eps_k$ (e.g.\ by $0.99$).}
\end{proof}

\begin{figure}
    \centering
    \includegraphics[width=.5\textwidth]{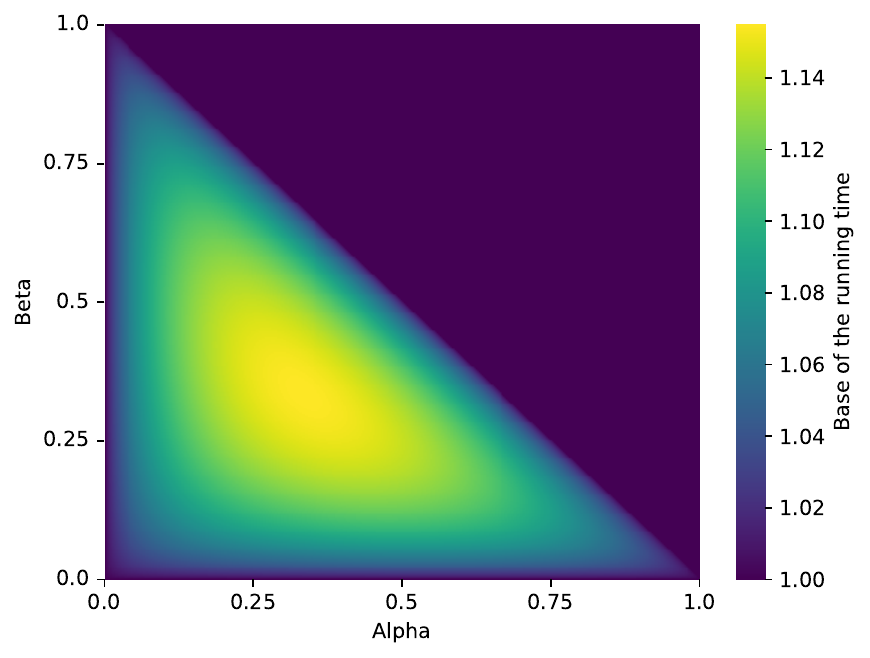}
    \caption{
    For different values of $\alpha, \beta \in [0, 1]$, we optimize the value of $\lambda(\alpha, \beta)$ to minimise the running time bound $\Ot(x^d \cdot n)$ of our algorithm, where the base of the running time is $x = \max\{2^{c_A}, 2^{c_B}\}$, and $c_A$ and $c_B$ are as in~\eqref{eq:running_time}. The running time is maximized for $\alpha = \beta = 1/3$, for which the best found value for $\lambda$ is $\lambda = 1/2$, resulting in a running time base $x \leq 1.16$.
    }\label{fig:opt_running_time}
\end{figure}

\section{Faster $k$-Orthogonal Vectors under the Asymptotic Rank Conjecture}\label{app:SetCover}

In the Set Cover problem, we are given a universe $\vecu$, a set family $\Ff$
and a positive integer $t$. The goal is to determine if there exist $t$ sets in
$\Ff$ whose union is $\vecu$. We say that a set family $\Ff$ is
$\delta$-bounded for some $\delta \in (0,1)$ if no set in $\Ff$ is larger than
$\delta|\vecu|$. Bj\"orklund et al.~\cite{radu25} showed that under the Asymptotic Rank
Conjecture, Set Cover instances with $\delta$-bounded set families can be solved
faster than $2^{|\vecu|}$ time.

\begin{theorem}[Theorem 1.2~\cite{radu25}]\label{thm:arc-set-cover}
    Let $\delta \in (0,1/4)$ be a fixed constant. Assuming the Asymptotic Rank Conjecture, 
    there exist $c > 0$ and $\eps > 0$, such that
    any Set Cover instance $(\vecu,\Ff,t)$
where $\Ff$ is $\delta$-bounded, can be solved deterministically in $\Oh((2-\eps)^{|\vecu|} \cdot |\vecu|^c)$ time.
\end{theorem}

Notice, that the running time of the above theorem does not depend on $|\Ff|$ (which is upper bounded by $\binom{|\vecu|}{|\vecu| / 4} \le (\frac{e |\vecu|}{|\vecu| / 4})^{|\vecu|/4} < 1.82^{|\vecu|}$).
In particular, it works even if $|\Ff| = \binom{|\vecu|}{\delta |\vecu|}$. We exploit that fact to prove the following result, which eliminates the necessity for $\delta$-boundedness.

\begin{corollary}\label{cor:arc-set-cover}
    Assuming the Asymptotic Rank Conjecture, there exist $c > 0$ and $\eps > 0$, such that
    any Set Cover instance $(\vecu,\Ff,t)$ can be solved in $\Oh(2^{(1-\eps)\cdot{|\vecu|}} \cdot (|\vecu| + |\Ff|)^c)$ time.
\end{corollary}
\begin{proof}
    Let $\vecs_1, \ldots, \vecs_t \in \Ff$ be a solution to the given Set Cover instance, and $c',\eps'$ be the constants from \cref{thm:arc-set-cover}.
    Let $\alpha = 1/5$. If $|\vecs_i| \leq \alpha \cdot |\vecu|$ for every $i \in [t]$, then we can solve the Set Cover instance in $\Oh((2-\eps')^{|\vecu|} \cdot |\vecu|^{c'})$ time using~\cref{thm:arc-set-cover}, by considering only the sets of size at most $\alpha \cdot |\vecu|$ in $\Ff$.

    Otherwise, there exists $i \in [t]$ such that $|\vecs_i| > \alpha \cdot |\vecu|$. By guessing $\vecs_i$ among the sets in $\Ff$ (which takes an extra $\Oh(|\Ff|)$ factor in the running time), we can restrict the universe to $\vecu' = \vecu \setminus \vecs_i$ and solve the remaining $(t-1)$-Set Cover instance $(\vecu', \Ff', t-1)$, where $\Ff' = \{\vecs \cap \vecu' \ \mid \ \vecs \in \Ff\}$. As the size of the universe is $|\vecu'| \leq (1-\alpha) \cdot |\vecu|$, we can solve the remaining instance in $2^{(1-\alpha)\cdot{|\vecu|}} \cdot (|\vecu| + |\Ff|)^{\Oh(1)}$ time by applying the standard dynamic programming algorithm for Set Cover (e.g.\ Theorem 6.1 in \cite{fpt-book}), that runs in $2^{|\vecu'|} \cdot (|\vecu'| + |\Ff'|)^{\Oh(1)}$ time.
    
    It follows that for sufficiently large $c$ and sufficiently small $\eps>0$, one can solve the Set Cover instance $(\vecu,\Ff,t)$ in $\Oh(2^{(1-\eps)\cdot{|\vecu|}} \cdot (|\vecu| + |\Ff|)^c)$ time.
\end{proof}

We now show an equivalence between Set Cover and $k$-Orthogonal Vectors in the setting of exact algorithms.

\begin{theorem}\label{thm:set-cover-equiv}
    The following statements are equivalent:
    \begin{itemize}
        \item[(a)] There exist $c > 0$ and $\eps > 0$, such that $k$-Orthogonal Vectors on $n$ vectors in dimension $d$ can be solved in $\Oh(2^{(1-\eps)\cdot d} \cdot n^c)$ time (algorithm works uniformly for a given $k \in \nat$).
        \item[(b)] There exists $c > 0$ and $\eps > 0$, such that Set Cover on a universe $\vecu$ and a set family $\Ff$ can be solved in $\Oh(2^{(1-\eps)\cdot{|\vecu|}} \cdot (|\vecu| + |\Ff|)^c)$ time.
    \end{itemize}
\end{theorem}
\begin{proof}
    \textbf{Implication (a)} $\bm{\Rightarrow}$ \textbf{(b)}.
    Let $(\vecu,\Ff,t)$ be an instance of Set Cover. Let $c>0$ and $\eps>0$ be fixed constants such that $t$-Orthogonal Vectors is solvable in $\Oh(2^{(1-\eps)\cdot d} \cdot n^c)$ time.
Consider the instance $\Aa_1, \ldots, \Aa_t$ of $t$-Orthogonal Vectors where $\Aa_1 = \ldots = \Aa_t$ and 
$\veca\subseteq \vecu$ is in $\Aa_1$ iff $\vecu\setminus \veca\in \Ff$. Since we constructed $|F|$ vectors of dimension $|\vecu|$, it suffices to show that there exists a solution for the Set Cover instance iff there exists a solution for the $t$-Orthogonal Vectors instance.

If there exists a solution $\veca_1, \ldots, \veca_t$ for the $t$-Orthogonal
Vectors instance, then $\bigcap_{i=1}^{t} \veca_i = \emptyset$, meaning that
$\bigcup_{i=1}^{t} \vecu\setminus \veca_i = \vecu$. But $\vecu\setminus \veca_i \in \Ff$ by construction, therefore $\vecu \setminus \veca_1, \dots, \vecu \setminus \veca_t$ is a solution for the Set Cover instance. 
Similarly, if there exists a solution $\vecs_1, \ldots, \vecs_t$ for the Set Cover
instance, then $\bigcup_{i=1}^{t} \vecs_i = \vecu$, meaning that $\bigcap_{i=1}^{t}
\vecu\setminus \vecs_i = \emptyset$. But $\vecu\setminus \vecs_i \in \Aa_i$ by construction, therefore $\vecu \setminus \vecs_1, \dots, \vecu \setminus \vecs_t$ is a solution for the $t$-Orthogonal Vectors problem.

    \textbf{Implication (a)} $\bm{\Leftarrow}$ \textbf{(b)}.
Let $(A_1,\ldots,A_k)$ be an instance of $k$-Orthogonal Vectors of dimension $d$ and let $c >0$ and $\eps>0$ be fixed constants such that Set Cover is solvable in $\Oh(2^{(1-\eps)\cdot{|\vecu|}} \cdot (|\vecu| + |\Ff|)^{c})$ time. Consider the instance $(\vecu,\Ff,t)$ of Set Cover where $t \coloneq k$, $\vecu \coloneq [d+k]$, and $\Ff \coloneq \bigcup_{i \in [k]} \Ff_i$, where $\Ff_i \coloneq \{\{d+i\} \cup ([d] \setminus \veca) \ \mid \ \veca \in A_i\}$.
Since $k$ is a constant, the size of the universe is $|\vecu| = d + \Oh(1)$ and $F = \Oh(n)$. Hence, the running time of the Set Cover algorithm is $\Oh(2^{(1-\eps)\cdot{d}} \cdot n^c)$. It suffices to show that there exists a solution for the $k$-Orthogonal Vectors instance iff there exists a solution for the constructed Set Cover instance.

Let $\veca_1, \ldots, \veca_k$ be a solution for the $k$-Orthogonal Vectors
instance. Then $\bigcap_{i=1}^{k} \veca_i = \emptyset$, meaning that
$\bigcup_{i=1}^{k} [d] \setminus \veca_i = [d]$.
Therefore, the sets $\{\{d+i\} \cup ([d] \setminus \veca_i) \ \mid \
i \in [k]\}$ cover the universe $\vecu$, and thus there exists a solution for
the Set Cover instance.  Similarly, let $\vecs_1, \ldots, \vecs_k$ be a solution
for the Set Cover instance. As $t = k$ and for every $i \in [k]$ it holds that
$\{d+i\} \in \vecs$ iff $\vecs \in \Ff_i$, we can assume that $\vecs_i \in
\Ff_i$ for every $i \in [k]$.  Because $\vecs_1,\ldots\vecs_k$ is a valid cover,
we have $\bigcup_{i=1}^{k} \vecs_i = \vecu$, and thus $\bigcup_{i=1}^{k} (\vecs_i\setminus \{d+i\}) = [d]$.
Therefore $\bigcap_{i=1}^{k} [d] \setminus \vecs_i = \emptyset$, meaning that the sets $\{\vecu\setminus
\vecs_i \ \mid \ i \in [k]\}$ are a solution
for the $k$-Orthogonal Vectors instance.
\end{proof}

The faster algorithm for $k$-Orthogonal Vectors follows directly from \cref{cor:arc-set-cover} and \cref{thm:set-cover-equiv}.

\arcalg*

On the other hand, the {Set Cover Conjecture} implies that there is no $\eps>0$ for which Set Cover
is solvable in $\Os(2^{(1-\eps)\cdot{|\vecu|}})$ time. In particular, \cref{thm:set-cover-equiv} implies the lower bound for $k$-Orthogonal Vectors stated in \cref{thm:lowerBound}, based on the Set Cover Conjecture (which is incompatible with the Asymptotic Rank Conjecture~\cite{bjorklund24}).

\scclowerbound*

\bibliographystyle{abbrv}
\bibliography{bib}

\end{document}